\newtheorem{theorem}{\bf Theorem}[section]
\newtheorem{corollary}{\bf Corollary}[section]
\newtheorem{remark}{\bf Remark}[section]
\begin{document}
\title{Stability of Two-dimensional Potential Flows Using Bicomplex Numbers}

\author{
V. G. Kleine$^{1,2}$, A. Hanifi$^{1}$ and D. S. Henningson$^{1}$}

\address{$^{1}$KTH Royal Institute of Technology, FLOW, Department of Engineering Mechanics, Stockholm, Sweden\\
$^{2}$Instituto Tecnol\'{o}gico de Aeron\'{a}utica, S\~{a}o Jos\'{e} dos Campos - SP, Brazil}

\subject{fluid mechanics, mechanical engineering, applied mathematics}

\keywords{stability, potential flow, vortex, bicomplex numbers, linear systems}

\corres{Vitor G. Kleine\\
\email{vitok@mech.kth.se}}

\begin{abstract}
The use of the complex velocity potential and the complex velocity is widely disseminated in the study of two-dimensional incompressible potential flows. The advantages of working with complex analytical functions made this representation of the flow ubiquitous in the field of theoretical aerodynamics. However, this representation is not usually employed in linear stability studies, where the representation of the velocity as real vectors is preferred by most authors, in order to allow the representation of the perturbation as the complex exponential function. Some of the classical attempts to use the complex velocity potential in stability studies suffer from formal errors. In this work, we present a framework that reconciles these two complex representations using bicomplex numbers. This framework is applied to the stability of the von Kármán vortex street and a generalized formula is found. It is shown that the classical results of the  symmetric and staggered von Kármán vortex streets are just particular cases of the generalized dynamical system in  bicomplex formulation.
\end{abstract}




\begin{fmtext}
\section{Introduction} \label{sec:introduction}

A complex velocity potential can be defined for two-dimensional incompressible potential flows. The complex velocity potential includes, in one function, information about both the real velocity potential and the stream function, and its derivative in respect to the complex position is the conjugate of the velocity, termed complex velocity. The advantages of working with analytic functions in the complex plane made  \hspace{12cm}
\end{fmtext}
\maketitle

\noindent
this representation ubiquitous in the field of theoretical aerodynamics. Borrowing the words of Munk~\cite{munk1925elements}:

\emph{``The advantage of having to do with one function of one variable only is so great, and moreover this function in practical cases becomes so much simpler than any of the functions which it represents, that it pays to get acquainted with this method even if the student has never occupied himself with complex numbers before.''}

This complex representation of potential flows continues to be an integral part of the recent research in fluid mechanics, both for canonical~\cite{krishnamurthy2020transformation,krishnamurthy2021liouville,christopher2021hollow,crowdy2021h,stremler2021something} and applied~\cite{devoria2021theoretical,gehlert2021boundary,saini2021leading,limbourg2021extended,dos2021viscous,semenov2021impulsive} flows. In particular, the complex potential is used in both classical and modern monographs about vortex dynamics~\cite{lamb1932hydrodynamics,saffman1992vortex,alekseenko2007theory}. However, when it comes to obtaining the stability properties of vortex configurations, most authors prefer representations of the position and velocities of the vortices using real vectors. This can be observed in the seminal works of von Kármán~\cite{karman1912mechanismus, karman1912uber} and Art. 156 of Lamb\cite{lamb1932hydrodynamics}, where the stability of the symmetrical and staggered von Kármán vortex streets are obtained. Recent examples of this approach are the references~\cite{mowlavi2016spatio,dynnikova2021stability}. These authors employ complex numbers not for the $y$-axis, but to represent the perturbation, in the form of a complex exponential \emph{ansatz} (as the Fourier modes of a general periodic solution with a growth rate). When the complex velocity is used in the same work as the complex exponential \emph{ansatz}, it is in a different section, as has been done by von Kármán\cite{karman1912mechanismus} himself; both representations are not employed simultaneously.

Milne-Thomson\cite{milne1968theoretical}, on the other hand, employs the complex velocity to obtain the stability of the von Kármán vortex street, by imposing a completely real perturbation, in the form of a cosinusoidal \emph{ansatz} (without the sinusoidal part). However, as opposed to the complex \emph{ansatz}, a cosinusoidal \emph{ansatz} is not a general perturbation and restricts the allowed solutions. Henceforth, the eigenvalues calculated by~\cite{milne1968theoretical} disagree with the eigenvalues of~\cite{karman1912mechanismus, karman1912uber, lamb1932hydrodynamics}. A completely real cosinusoidal perturbation in~\cite{rosenhead1929vii} has also been identified as a cause of a mistake in the study of the stability of a confined vortex street\cite{jimenez1987linear,mowlavi2016spatio}. In extreme cases, the imposition of a perturbation that is not general enough could lead to erroneous conclusions. The year before von Kármán published the now-famous result $\pi h=\cosh^{-1}{\sqrt{2}}$ for the stable configuration~\cite{karman1912mechanismus}, he published a paper with the different result $\pi h=\cosh^{-1}{\sqrt{3}}$~\cite{karman1911mechanismus} ($h$ defined in figure~\ref{fig:vortexstreet}). According to~\cite{meleshko2007bibliography}, there was nothing wrong with von Kármán's math, he just imposed a perturbation on only one vortex pair and kept the other vortices fixed in their position, instead of perturbing the whole vortex street, in fact solving a different problem.

In chapter 7.5 of Saffman\cite{saffman1992vortex}, the stability of a row of vortices is analysed by representing both the complex velocity and the complex exponential \emph{ansatz} using the same set of complex numbers, defined by a single imaginary number $i$. In chapter 7.6 the results of the stability of the von Kármán vortex street are presented but the method is not detailed, at one point it is said that the treatment of Lamb\cite{lamb1932hydrodynamics} is followed and, at another point, that an analysis similar to chapter 7.5 is performed. From our understanding, the method with real vectors for position and velocity is used with the complex exponential \emph{ansatz}. That is not only our understanding of the text, but we also notice that the method of chapter 7.5 of~\cite{saffman1992vortex} contains a formal error: the imaginary number $i$ is used to represent both the $y$-axis and the sinusoidal part of the complex exponential function. This could lead to some unexpected mistakes, such as the sinusoidal component of the perturbation in the $x$-axis being indistinguishable from the cosinusoidal component of the perturbation in the $y$-axis (and vice-versa), which would compromise the values of eigenvalues and eigenvectors. The application of the method of chapter 7.5 to the stability of the von Kármán vortex street leads to incorrect results. It should be noticed, however, that for the case of a row of vortices, this formal error is inconsequential. In section~\ref{sec:stability}\ref{sec:stabilityrow}, we present an explanation of why the results are not affected by this error for this specific configuration.

\begin{figure}[t]
  \centering
  \begin{tikzpicture}
    \def\a{1.5}
    \def\r{\a*0.25}
    \def\h{1.2*\a}
    \def\l{0.3*\a}
    \draw[->] (0,0)--(3.5*\a,0) node[below]{x};
    \draw[->] (0,0)--(0,1.5*\h) node[below left]{y};
    \draw[thick,->] (0,0)--(\l,\h) node[midway,right,black]{$d$};
    \fill (0,0) circle[radius=1pt];
    \draw[-stealth,gray] (0,0) +(150:\r) arc(150:430:\r);
    \fill (\l,\h) circle[radius=1pt];
    \draw[-stealth,gray] (\l,\h) +(170:\r) arc(170:-70:\r);
    \fill (\a,0) circle[radius=1pt];
    \draw[-stealth,gray] (\a,0) +(150:\r) arc(150:430:\r);
    \fill (\l+\a,\h) circle[radius=1pt];
    \draw[-stealth,gray] (\l+\a,\h) +(170:\r) arc(170:-70:\r) node[midway,above,black]{$z_{2n}^0=a n + d$} node[midway,below=20pt,left=2pt,black]{$-\kappa$};
    \fill (2*\a,0) circle[radius=1pt];
    \draw[-stealth,gray] (2*\a,0) +(150:\r) arc(150:430:\r) node[midway,below,black]{$z_{1m}^0=a m$} node[midway,above=22pt,left=0pt,black]{$\kappa$};
    \fill (\l+2*\a,\h) circle[radius=1pt];
    \draw[-stealth,gray] (\l+2*\a,\h) +(170:\r) arc(170:-70:\r);
    \fill (3*\a,0) circle[radius=1pt];
    \draw[-stealth,gray] (3*\a,0) +(150:\r) arc(150:430:\r);
    \fill (\l+3*\a,\h) circle[radius=1pt];
    \draw[-stealth,gray] (\l+3*\a,\h) +(170:\r) arc(170:-70:\r);
    \fill (-\a,0) circle[radius=1pt];
    \draw[-stealth,gray] (-\a,0) +(150:\r) arc(150:430:\r);
    \fill (\l-\a,\h) circle[radius=1pt];
    \draw[-stealth,gray] (\l-\a,\h) +(170:\r) arc(170:-70:\r);
    \fill (-2*\a,0) circle[radius=1pt];
    \draw[-stealth,gray] (-2*\a,0) +(150:\r) arc(150:430:\r);
    \fill (\l-2*\a,\h) circle[radius=1pt];
    \draw[-stealth,gray] (\l-2*\a,\h) +(170:\r) arc(170:-70:\r);
    \draw[decoration={brace,mirror,raise=2pt,amplitude=5pt},decorate]  (0,0) -- (\a,0) node[midway,below=6pt,black]{$a$};
    \draw[decoration={brace,raise=2pt,amplitude=5pt},decorate]  (0,0) -- (0,\h) node[midway,left=6pt,black]{$ah$};
    \draw (-3.0*\a,0)  node{Row 1};
    \draw (-3.0*\a,\h) node{Row 2};
  \end{tikzpicture}
  \caption{Parameters of the von Kármán vortex street.}
  \label{fig:vortexstreet}
\end{figure}
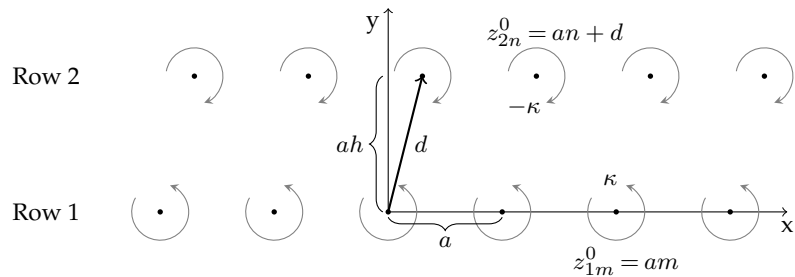

In an effort to employ the two complex representations with the same set of complex numbers in the study of the stability of vortices,~\cite{meiron1984linear,crowdy2002stability} defined ``independent'' variables that can be interpreted as an unusual type of conjugation of other variables. However, such an approach can lead to mistakes, since two distinct interpretations of the complex numbers are being represented with the same complex unit and extreme care should be taken to avoid mixing complex variables.

In order to reconcile the use of both the complex velocity and the complex exponential \emph{ansatz} without the risk of mixing or misinterpreting the terms, we propose the use of bicomplex numbers. In the set of bicomplex numbers, two distinct imaginary units, $i$ and $j$, are used simultaneously. The use of bicomplex numbers instead of the more known extension of the complex number system, the quaternions, is due to the need of maintaining the commutative property. The strategy of~\cite{meiron1984linear,crowdy2002stability} seems to be trying to emulate properties that come naturally when bicomplex numbers are used directly, as explained in section~\ref{sec:framework}\ref{sec:workbc}.

The idea of employing separate imaginary units to represent different quantities in potential flows is not new. Wu~\cite{wu1961swimming} has proposed the representation of space and time variables with different complex units, in the study of the harmonic response of a waving plate, an approach also adopted in other works~\cite{moore2014analytical,moore2017fast,hauge2021new}. In an extension of~\cite{wu1961swimming}, Baddoo~\emph{et al.}\cite{baddoo2021generalization} employed the matrix formulation that is equivalent to imaginary numbers for the complex exponential, in order to avoid two distinct imaginary units. None of these previous works mentioned explicitly the terms bicomplex or multicomplex numbers, so possibly the authors were unfamiliar with the literature concerning these types of numbers.

Nevertheless, becoming aware of the set of bicomplex numbers has the benefit of drawing from its formalism, properties and theorems, in addition to also knowing the limitations of the method. Regarding the algebra of bicomplex numbers, the reader is referred to~\cite{price1991introduction,alpay2014basics,luna2015bicomplex}. In section~\ref{sec:bicomplex}, a brief introduction of bicomplex numbers focusing on the nomenclature used in this paper is presented. In section~\ref{sec:framework}\ref{sec:workbc} we show that the lack of applicability of the Fundamental Theorem of Algebra and the existence of zero divisors in bicomplex algebra are not issues for the current application.

A framework for solving linear stability problems of two-dimensional potential flows using bicomplex numbers is presented in section~\ref{sec:framework}. In section~\ref{sec:stability}\ref{sec:stabilitykarman} this framework is applied to the von Kármán vortex street and a generalized version of the linearized dynamic system is found analytically. The particular cases of the stability of the symmetric and staggered vortex streets, derived from this generalized system, agree perfectly with the classical results from the literature.

As far as the authors are aware, this is the first explicit application of bicomplex algebra to fluid mechanics. The authors were only able to identify implicit applications, mentioned before, or the use of bicomplex numbers in a complex-step differentiation process, such as in~\cite{mons2014isotropic}, where bicomplex numbers are not essential but merely employed to improve the numerical accuracy of derivatives. Most importantly, the bicomplex formalism is a natural step to unify the two uses of complex numbers in stability calculation of potential flows, such that it was previously used~\cite{wu1961swimming,moore2014analytical,moore2017fast,hauge2021new} without the mention to bicomplex numbers. Not only it could have prevented some of the previously mentioned formal errors and incompleteness, but it provides a more convenient, concise and general formulation. As discussed in section~\ref{sec:stability}\ref{sec:stabilitykarman}, the use of bicomplex numbers was essential to arrive at a general system for the von Kármán vortex street, of which the classical results from the literature are particular cases.

Here the framework is applied and validated to the case of a configuration of vortices, which is a problem of interest in many fluid mechanics applications and also from an applied mathematics perspective (see, for example,~\cite{meleshko2007bibliography,alekseenko2007theory,aref2007point}). Nevertheless, this framework is believed to be applicable to all two-dimensional incompressible potential flows, making it possible to apply it to more sophisticated configurations, such as flows represented in the complex plane with the assistance of conformal maps.

\section{Bicomplex numbers} \label{sec:bicomplex}

The set of bicomplex numbers $\mathbb{BC}$ is a higher dimension generalization of the set of complex numbers that maintain the commutative properties. It is a subset of the set of multicomplex numbers. In this section, a brief introduction of bicomplex numbers focusing on the nomenclature used in this paper is presented. For a more comprehensive introduction, the reader is referred to the monographs of Price\cite{price1991introduction}, Alpay \emph{et al.}\cite{alpay2014basics} and Luna-Elizarrar\'{a}s \emph{et al.}\cite{luna2015bicomplex}, on which this section is based.

An element $b$ in the space $\mathbb{BC}$ is a number in the form
\begin{equation}
    b = (x_1 + i y_1) + j (x_2 + i y_2) = x_1 + i y_1 + j x_2 + i j y_2
\end{equation}
where $x_1, x_2, y_1, y_2 \in \mathbb{R}$, and $i$ and $j$ are distinct commuting imaginary units, i.e.,
\begin{equation}
    i^2=j^2=-1, \quad i \neq \pm j, \quad ij=ji .
\end{equation}
Since two imaginary numbers coexist inside $\mathbb{BC}$, the nomenclature $\mathbb{C}(i)$ and $\mathbb{C}(j)$ is used to distinguish between the two sets of complex numbers inside $\mathbb{BC}$. For example, if $z \in \mathbb{C}(i)$, $z$ can be written in the form $z=x+iy$, where $x, y \in \mathbb{R}$. Equivalent definitions of $b$ are 
\begin{equation}
    b = z_1 + j z_2, \quad z_1, z_2 \in \mathbb{C}(i)
\end{equation}
\begin{equation}
    b = \eta_1 + i \eta_2, \quad \eta_1, \eta_2 \in \mathbb{C}(j)
\end{equation}

The following definitions are used for three conjugations for bicomplex numbers, in analogy to the usual complex conjugation:
\begin{enumerate}
    \item $\overline{b} \coloneqq \overline{z_1} + j \overline{z_2} = (x_1 - i y_1) + j (x_2 - i y_2)$ (the bar-conjugation)
    \item $b^{\dagger} \coloneqq z_1 - j z_2 = (x_1 + i y_1) - j (x_2 + i y_2) $ (the $^{\dagger}$-conjugation)
    \item $b^{*} \coloneqq \overline{b}^{\dagger} = \overline{b^{\dagger}} = \overline{z_1} - j \overline{z_2} = (x_1 - i y_1) - j (x_2 - i y_2)$ (the $^{*}$-conjugation)
\end{enumerate}
where $\overline{z_1}, \overline{z_2}$ denotes the usual complex conjugates in $\mathbb{C}(i)$.

In this work, we introduce the notation
\begin{enumerate}
    \item $Re^i(b) \coloneqq (b + \overline{b})/2 = \eta_1 = x_1 + j x_2 \in \mathbb{C}(j)$
    \item $Im^i(b) \coloneqq (b - \overline{b})/(2i) = \eta_2 = y_1 + j y_2 \in \mathbb{C}(j)$
    \item $Re^j(b) \coloneqq (b + b^{\dagger} )/2 = z_1 = x_1 + i y_1 \in \mathbb{C}(i)$
    \item $Im^j(b) \coloneqq (b - b^{\dagger} )/(2j) = z_2 = x_2 + i y_2 \in \mathbb{C}(i)$ .
\end{enumerate}
Thus, $b = Re^i(b) + i Im^i(b) = Re^j(b) + j Im^j(b)$. We recognize that this is an abuse of notation, since $Re^i(b) \notin \mathbb{R}$ (and analogously for the other cases). However, the proposed interpretation is that $Re^i(b)$ is the part of $b$ that do not have a component in the imaginary number $i$, just as the usual definition of $Re(z)$ for $z \in \mathbb{C}(i)$; and $Im^i(b)$ is the part of $b$ that is multiplied by the imaginary number $i$, just as $Im(z)$. We believe that this notation should be more intuitive for people used to complex numbers than a made-up notation that is not drawn from the usual $Re$ and $Im$ notations.

Note that, since $(ij)^2=1$, $\mathbb{BC}$ contains a set of hyperbolic numbers (also called split-complex numbers). We choose not to introduce a new nomenclature for the basic hyperbolic unit, and represent it as $ij$ or $ji$ in this work.

Similarly to hyperbolic numbers, the set of bicomplex numbers also contains zero divisors. Also, the Fundamental Theorem of Algebra is not valid in its original form~\cite{pogorui2006set,luna2015bicomplex}. These facts could pose a challenge for the present application, because the linear stability analysis involves finding eigenvalues, that is performed by finding the roots of polynomials. However, we show in section~\ref{sec:framework}\ref{sec:workbc} that this is not an issue for the present application, because all polynomial coefficients and solutions belong to $\mathbb{C}(j)$.

A possible challenge for the widespread adoption of the method proposed here is the lack of native support for multicomplex numbers on the current programming languages. Purely analytical studies, such as this one, is not affected by this issue. For problems that require the use of software, the user would probably require external packages, such as the one developed in~\cite{casado2020algorithm} for Matlab.


\section{Framework} \label{sec:framework}
In this section, we present the framework to study the stability of potential flows keeping the complex representation of the complex velocity and the complex exponential function. The first steps of the framework are the usual linearization process (see, for example,~\cite{lamb1932hydrodynamics,saffman1992vortex,milne1968theoretical}), that are repeated here for completeness. The main difference in the linearization process is in step~\ref{item:complexansatz}, where a different complex unit $j$ is used to represent the complex exponential. After the linearized dynamics are obtained, three different approaches to study the stability of the system are discussed in the following sections. The steps can be summarized as:

\begin{enumerate}
  \item Define the complex potential $\Phi(z) \in \mathbb{C}(i)$ in terms of the complex position $z=x+iy$. This is usually done by combination of distributed or concentrated sources, sinks, vortices, doublets and other elementary solutions of the Laplace's equation in terms of analytical complex functions (see, for example,~\cite{munk1925elements,lamb1932hydrodynamics,saffman1992vortex,alekseenko2007theory,milne1968theoretical}). \label{item:complexpotential}
  \item Find the complex velocity function $w=u - i v$ (defined as the conjugate of the velocity vector in complex notation) by deriving $\Phi(z)$~\cite{munk1925elements,lamb1932hydrodynamics,saffman1992vortex,alekseenko2007theory,milne1968theoretical}:
  \begin{equation}
    \frac{\partial \overline{z}}{\partial t} = w(z) = \frac{\partial \Phi}{\partial z}
  \end{equation}
  Step~\ref{item:complexpotential} can be skipped if $w(z)$ is directly constructed by elementary solutions.
  \item Find a steady state (or baseflow) around which the equations are going to be linearized, such that
  \begin{equation}
    \frac{\partial \overline{z^0}}{\partial t} = w(z^0)
    \label{eq:steadyw}
  \end{equation}
  does not change in time, where the superscript $^{0}$ indicates undisturbed values. In particular, for flow structures that are advected by the flow, such as vortices, 
  \begin{equation}
    \frac{\partial \overline{z^0}}{\partial t} = w(z^0) = 0 .
  \end{equation}
  \item Define which flow structures (vortices, vortex sheets, sources, etc) are going to be perturbed. Denoting $\alpha$ as the index of each of the $N$ sets of flow structures that are perturbed independently (for example, $\alpha=1$ and $\alpha=2$ for the $N=2$ rows of the von Kármán vortex street), the positions of the flow structures are denoted as $z_{\alpha n}$. In this case, $n$ indicates the index of the flow structure within the elements of set $\alpha$.
  
  For example, $z_{2,4}$ indicates the position of the vortex of index $n=4$ of the second row of the vortex street, that is disturbed simultaneously to the vortices located at $z_{2,0}$, $z_{2,1}$, $z_{2,2}$, etc (see figure~\ref{fig:vortexstreet} and section~\ref{sec:stability}\ref{sec:stabilitykarman} for more detailed definition of the indices and positions of the vortices for this example).
  \item Impose disturbances to the steady state, in the form
  \begin{equation}
    \begin{split}
    z = z^0 + z' \\
    z_{\alpha n} = z^0_{\alpha n} + z_{\alpha n}' .
    \end{split}
  \end{equation}
  \item Employ the Taylor expansion for multiple variables (\emph{w.r.t.} $z'$ and all $z_{\alpha n}'$) to find the linearized equations
  \begin{equation}
    \frac{\partial \overline{z^0}}{\partial t} + \frac{\partial \overline{z'}}{\partial t} = w(z^0) + \frac{\partial w (z^0)}{\partial z} z' + \sum_{\alpha=1}^{N} \, \, \sideset{}{^*} \sum_{n=-\infty}^{+\infty} \frac{\partial w (z^0)}{\partial z_{\alpha n}} z_{\alpha n}' + \mathcal{O}(z'^2,z_{\alpha n}'^2) .
  \end{equation}
  where the symbol $^*$ in the sum indicates that the respective term is avoided if $z=z_{\alpha n}$. Neglecting the second-order terms and using equation~\ref{eq:steadyw}
  \begin{equation}
    \frac{\partial \overline{z'}}{\partial t} = \frac{\partial w (z^0)}{\partial z} z' + \sum_{\alpha=1}^{N} \, \, \sideset{}{^*} \sum_{n=-\infty}^{+\infty} \frac{\partial w (z^0)}{\partial z_{\alpha n}} z_{\alpha n}'.
    \label{eq:framelinearization}
  \end{equation}
  \item Apply the complex exponential \emph{ansatz} (or kernel), defined in terms of the complex unit $j$:
  \begin{equation}
    z_{\alpha n}' = \hat{z_{\alpha}} e^{\varphi n j}
  \end{equation}
  where $\varphi$ can be interpreted as the wavenumber of the disturbance. For continuous structures such as a vortex sheet, a continuous disturbance can be used (for example, using the kernel $e^{\varphi x j}$).
  \label{item:complexansatz}
  \item Evaluate equation~\ref{eq:framelinearization} at positions that give one independent equation for each $\hat{z_{\alpha}}$ (usually evaluated at $n=0$), to arrive at a system
  \begin{equation}
    \frac{\partial}{\partial t} 
    \begin{bmatrix}
      \overline{\hat{z_{1}}} \\
      \overline{\hat{z_{2}}} \\
      \vdots \\
      \overline{\hat{z_{\alpha}}} \\
      \vdots \\
      \overline{\hat{z_{N}}}
    \end{bmatrix}
    = \mathbf{M}
    \begin{bmatrix}
      \hat{z_{1}} \\
      \hat{z_{2}} \\
      \vdots \\
      \hat{z_{\alpha}}
      \vdots \\
      \hat{z_{N}}
    \end{bmatrix}
  \end{equation}
  where $\mathbf{M} \in \mathbb{BC}^{N \times N}$.
  \item Defining
  \begin{equation}
    \mathbf{\hat{z}} = 
    \begin{bmatrix}
      \hat{z_{1}} \\
      \hat{z_{2}} \\
      \vdots \\
      \hat{z_{\alpha}}
      \vdots \\
      \hat{z_{N}}
    \end{bmatrix}
  \end{equation}
  the linear system
  \begin{equation}
    \frac{\partial \overline{\mathbf{\hat{z}}}}{\partial t} = \mathbf{M} \mathbf{\hat{z}}
    \label{eq:framelinearsystem}
  \end{equation}
  that defines the linearized dynamics is found.
  \item From equation~\ref{eq:framelinearsystem}, there are many options to calculate the eigenvalues and eigenvectors that define the linear stability of the flow. The three approaches discussed in sections~\ref{sec:framework}\ref{sec:workcj},~\ref{sec:workbc} and~\ref{sec:worksecond} are, respectively:
  \begin{itemize}
    \item Going back to a single complex plane and working within $\mathbb{C}(j)$, in an approach similar to the traditional method (for example~\cite{karman1912uber,lamb1932hydrodynamics});
    \item Working in the bicomplex space $\mathbb{BC}$;
    \item Working in the bicomplex space $\mathbb{BC}$ with the second-order linear system, in an approach similar to chapter 7.5 of~\cite{saffman1992vortex}.
  \end{itemize}
  The differences and challenges of these three approaches and the connection between them are discussed in sections~\ref{sec:framework}\ref{sec:workcj} to~\ref{sec:worksecond}.
\end{enumerate}

\subsection{First approach: working in $\mathbb{C}(j)$} \label{sec:workcj}
The traditional approach (for example,~\cite{karman1912uber,lamb1932hydrodynamics}) of working with $x$ and $y$ in the real plane can be derived from equation~\ref{eq:framelinearsystem}, by going back to $\mathbb{C}(j)$. This means that we still work in the complex plane defined by $j$ in the \emph{ansatz}, but with positions of the vortices, $\mathbf{\hat{x}}=Re^i(\mathbf{\hat{z}})$ and $\mathbf{\hat{y}}=Im^i(\mathbf{\hat{z}})$, that do not have a component in $i$.

Applying $Re^i$ and $Im^i$ to equation~\ref{eq:framelinearsystem}:
\begin{equation}
  \frac{\partial }{\partial t} \begin{bmatrix}
    \mathbf{\hat{x}} \\
    \mathbf{\hat{y}}
  \end{bmatrix}
  =
  \begin{bmatrix}
    Re^i \left(\frac{\partial \overline{\mathbf{\hat{z}}}}{\partial t} \right) \\
    -Im^i \left(\frac{\partial \overline{\mathbf{\hat{z}}}}{\partial t} \right)
  \end{bmatrix}
   =
  \begin{bmatrix}
    Re^i(\mathbf{M}) & -Im^i(\mathbf{M}) \\
    -Im^i(\mathbf{M}) & -Re^i(\mathbf{M}) 
  \end{bmatrix}
  \begin{bmatrix}
    \mathbf{\hat{x}} \\
    \mathbf{\hat{y}}
  \end{bmatrix} .
\end{equation}

Defining 
\begin{equation}
  \mathbf{R} = \begin{bmatrix}
    Re^i(\mathbf{M}) & -Im^i(\mathbf{M}) \\
    -Im^i(\mathbf{M}) & -Re^i(\mathbf{M}) 
  \end{bmatrix}
\end{equation}
the stability properties of the system are given by the eigenvalues and eigenvectors of the matrix $\mathbf{R} \in \mathbb{C}(j)^{2N \times 2N}$, where $2N$ is the size of the system, obtained by the number of complex variables $N$ (in the case of von Kármán vortex street $N=2$) multiplied by the dimension of the problem, $2$ ($x$ and $y$ directions).

In order to study the linear stability of the system, solutions formed by complex exponential functions are assumed. The substitution
\begin{equation}
  \begin{bmatrix}
    \mathbf{\hat{x}} \\
    \mathbf{\hat{y}}
  \end{bmatrix}
  =
  \begin{bmatrix}
    \mathbf{\hat{\hat{x}}} \\
    \mathbf{\hat{\hat{y}}}
  \end{bmatrix} e^{\lambda t} ,
  \label{eq:tempansatz}
\end{equation}
with $\lambda$ as the temporal growth rate of disturbances, gives the eigenvalue problem
\begin{equation}
  \mathbf{R} \begin{bmatrix}
    \mathbf{\hat{\hat{x}}} \\
    \mathbf{\hat{\hat{y}}}
  \end{bmatrix}
  =
  \lambda \begin{bmatrix}
    \mathbf{\hat{\hat{x}}} \\
    \mathbf{\hat{\hat{y}}}
  \end{bmatrix}
\end{equation}
where it is clear that $\lambda \in \mathbb{C}(j)$ is an eigenvalue of $\mathbf{R}$. The eigenvalues and eigenvectors of $\mathbf{R}$ can be complex. In this case, to be physically meaningful, the eigenvalues and eigenvectors should lie in $\mathbb{C}(j)$. The imaginary part of the eigenvalues is the frequency of the response and the imaginary part of the eigenvectors is related to the phase of the complex exponential function. Hence, the eigenvalue decomposition can be performed as usual, just remembering to use the imaginary unit $j$ if needed (in this context, $i$ has no meaning).


This approach is used to compare our results with the results from previous studies~\cite{karman1912uber,lamb1932hydrodynamics} in section~\ref{sec:stability}.

\subsection{Second approach: working in $\mathbb{BC}$} \label{sec:workbc}
To arrive at an eigenvalue problem using $\mathbf{\hat{z}}$ and without requiring $\mathbf{\hat{x}}$ or $\mathbf{\hat{y}}$, we take the bar-conjugate of equation~\ref{eq:framelinearsystem} (conjugate in respect to $i$):
\begin{equation}
  \frac{\partial \mathbf{\hat{z}}}{\partial t} = \overline{\mathbf{M}} \, \overline{\mathbf{\hat{z}}}
  \label{eq:Azconj}
\end{equation}
hence
\begin{equation}
  \frac{\partial }{\partial t} \begin{bmatrix}
    \mathbf{\hat{z}} \\
    \overline{\mathbf{\hat{z}}}
  \end{bmatrix}
   =
  \begin{bmatrix}
    0            & \overline{\mathbf{M}} \\
    \mathbf{M} & 0
  \end{bmatrix}
  \begin{bmatrix}
    \mathbf{\hat{z}} \\
    \overline{\mathbf{\hat{z}}}
  \end{bmatrix}
  \label{eq:Qsystem}
\end{equation}
which is a linear system of the same size as the one described in section~\ref{sec:workcj}, $2N$. Defining
\begin{equation}
  \mathbf{Q} =
  \begin{bmatrix}
    0            & \overline{\mathbf{M}} \\
    \mathbf{M} & 0
  \end{bmatrix}
\end{equation}
it is easy to see that $\mathbf{Q} \in \mathbb{BC}^{2N \times 2N}$.

The temporal response is studied by assuming the same form of equation~\ref{eq:tempansatz}:
\begin{equation}
    \mathbf{\hat{z}}=\mathbf{\hat{x}} + i \mathbf{\hat{y}} = (\mathbf{\hat{\hat{x}}} + i \mathbf{\hat{\hat{y}}}) e^{\lambda t} = \mathbf{\hat{\hat{z}}} e^{\lambda t}
\end{equation}
where $\lambda \in \mathbb{C}(j)$. Possible solutions for $\lambda \in \mathbb{BC}$ that have an imaginary component in $\mathbb{C}(i)$ are not suitable to study the time evolution of a perturbation. The multiplication by $e^{\lambda t}$ for $\lambda$ with non-zero $i$-part would indicate a rotation in the $x$-$y$-plane, which does not correspond to a growth of the eigenvectors. Since 
\begin{equation}
    \overline{\mathbf{\hat{z}}} = \overline{\mathbf{\hat{\hat{z}}}} e^{\lambda t}
    \label{eq:conjtempansatz}
\end{equation}
is valid for $\lambda \in \mathbb{C}(j)$, we arrive at the eigenvalue problem:
\begin{equation}
  \mathbf{Q} \begin{bmatrix}
    \mathbf{\hat{\hat{z}}} \\
    \overline{\mathbf{\hat{\hat{z}}}}
  \end{bmatrix}
  =
  \lambda \begin{bmatrix}
    \mathbf{\hat{\hat{z}}} \\
    \overline{\mathbf{\hat{\hat{z}}}}
  \end{bmatrix} .
  \label{eq:eigenproblemQ}
\end{equation}
The derivation of equation~\ref{eq:eigenproblemQ} also corroborates that solutions of $\lambda$ with non-zero part in $\mathbb{C}(i)$ do not have a physical meaning: in equation~\ref{eq:conjtempansatz}, the property that $\overline{\lambda} = \lambda$ for $\lambda \in \mathbb{C}(j)$ is used. 

From this, it is possible to interpret the ``independent'' variables of~\cite{meiron1984linear,crowdy2002stability} as the bar-conjugate of the main variables, similar to how the equations related to $\overline{\mathbf{\hat{z}}}$ are needed here (equation~\ref{eq:Azconj}). In those works, taking the conjugate of a variable would be equivalent to taking the $^{*}$-conjugate in the bicomplex formulation, which would change the sign of the imaginary part of $\lambda$, among other potential problems, what motivated their unusual approach. In the bicomplex formulation, it is straightforward to notice that the extra equations required can be obtained directly from the bar-conjugate of the original equations.

To find the eigenvalues of $\mathbf{Q}$, the roots of its characteristic polynomial should be found. Finding the roots of a polynomial in bicomplex algebra poses some challenges, because the Fundamental Theorem of Algebra is not valid ~\cite{pogorui2006set,luna2015bicomplex}. This means that it is not possible to guarantee that a polynomial of order $2N$ will have $2N$ roots, as it is for complex numbers (when considering multiplicity). However, for this matrix, it is not necessary to rely on the analogue of the Fundamental Theorem of Algebra for bicomplex polynomials~\cite{luna2015bicomplex}, because the coefficients of the characteristic polynomials lie in $\mathbb{C}(j)$ and we are interested in roots that lie in $\mathbb{C}(j)$.

\begin{theorem} \label{the:polynomialQ}
  If a matrix $\mathbf{Q} \in \mathbb{BC}^{2N \times 2N}$ can be written as
  \begin{equation}
    \mathbf{Q} =
    \begin{bmatrix}
      0            & \overline{\mathbf{M}} \\
      \mathbf{M} & 0
    \end{bmatrix}
  \end{equation}
  where $\mathbf{M} \in \mathbb{BC}^{N \times N}$ then all the coefficients of the characteristic equation of $\mathbf{Q}$ lie in $\mathbb{C}(j)$.
\end{theorem}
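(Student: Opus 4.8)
The plan is to exploit the block anti-diagonal structure of $\mathbf{Q}$ together with the algebraic identity that relates the characteristic polynomial of $\mathbf{Q}$ to a determinant involving only $\mathbf{M}$ and $\overline{\mathbf{M}}$. First I would write $\det(\mathbf{Q} - \lambda \mathbf{I}_{2N})$ in block form as $\det\begin{bmatrix} -\lambda \mathbf{I}_N & \overline{\mathbf{M}} \\ \mathbf{M} & -\lambda \mathbf{I}_N \end{bmatrix}$, and use the standard block-determinant formula (valid whenever the blocks commute, or directly for $\lambda \neq 0$ by Schur complement) to reduce this to $\det(\lambda^2 \mathbf{I}_N - \mathbf{M}\,\overline{\mathbf{M}})$, up to a sign $(-1)^N$ that is irrelevant for the coefficients' location. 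One must be a little careful here because bicomplex matrices have zero divisors, so I would either (a) note that the Schur-complement reduction is a polynomial identity in the entries and in $\lambda$, hence holds over the commutative ring $\mathbb{BC}$ by a formal argument, or (b) restrict to $\lambda \in \mathbb{C}(j)$ where $-\lambda\mathbf{I}_N$ is invertible whenever $\lambda \neq 0$ and invoke continuity/polynomial-identity to cover $\lambda = 0$.

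The key observation is then that the matrix $\mathbf{M}\,\overline{\mathbf{M}}$ has all its entries in $\mathbb{C}(j)$. This is because, writing any bicomplex number as $\eta_1 + i\eta_2$ with $\eta_1,\eta_2 \in \mathbb{C}(j)$, the bar-conjugation sends $\eta_1 + i\eta_2 \mapsto \eta_1 - i\eta_2$; hence for a scalar $m \in \mathbb{BC}$, the product $m\overline{m} = \eta_1^2 + \eta_2^2 \in \mathbb{C}(j)$. More generally, $\left(\mathbf{M}\,\overline{\mathbf{M}}\right)_{pq} = \sum_k M_{pk}\,\overline{M_{qk}}$, and each term $M_{pk}\,\overline{M_{qk}}$ — writing $M_{pk} = a + ib$, $M_{qk} = c + id$ with $a,b,c,d \in \mathbb{C}(j)$ — equals $(ac + bd) + i(bc - ad)$, which need not be in $\mathbb{C}(j)$ termwise. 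So the naive claim that $\mathbf{M}\,\overline{\mathbf{M}} \in \mathbb{C}(j)^{N \times N}$ is false; the real content is more subtle. The correct route is to observe that the whole matrix $\mathbf{Q}$ satisfies $\overline{\mathbf{Q}} = \mathbf{P}\,\mathbf{Q}\,\mathbf{P}$ where $\mathbf{P} = \begin{bmatrix} 0 & \mathbf{I}_N \\ \mathbf{I}_N & 0 \end{bmatrix}$, since bar-conjugating swaps $\mathbf{M} \leftrightarrow \overline{\mathbf{M}}$, which is exactly conjugation by the swap $\mathbf{P}$. Therefore $\overline{\mathbf{Q}}$ is similar to $\mathbf{Q}$, so they have the same characteristic polynomial: $\overline{\det(\lambda\mathbf{I} - \mathbf{Q})} = \det(\overline{\lambda}\mathbf{I} - \overline{\mathbf{Q}}) = \det(\overline{\lambda}\mathbf{I} - \mathbf{Q})$ as polynomials in the indeterminate — equivalently, each coefficient $c_r$ of the characteristic polynomial satisfies $\overline{c_r} = c_r$, which is precisely the statement that $c_r \in \mathbb{C}(j)$ (the bar-conjugation in $\mathbb{BC}$ fixes exactly $\mathbb{C}(j)$).

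So the actual proof I would write is: (1) recall that $b \in \mathbb{BC}$ lies in $\mathbb{C}(j)$ iff $\overline{b} = b$; (2) verify the similarity $\overline{\mathbf{Q}} = \mathbf{P}^{-1}\mathbf{Q}\,\mathbf{P}$ with the swap matrix $\mathbf{P}$ (here $\mathbf{P}^{-1} = \mathbf{P}$), using that bar-conjugation is a ring automorphism of $\mathbb{BC}$ that commutes with matrix operations entrywise; (3) conclude $\det(\lambda\mathbf{I} - \overline{\mathbf{Q}}) = \det(\lambda\mathbf{I} - \mathbf{Q})$; (4) match this against the fact that applying bar-conjugation to the coefficients of $\det(\lambda\mathbf{I} - \mathbf{Q})$ (treating $\lambda$ as a formal symbol fixed by the conjugation, or $\lambda \in \mathbb{C}(j)$) gives the characteristic polynomial of $\overline{\mathbf{Q}}$; hence every coefficient is bar-invariant, i.e.\ lies in $\mathbb{C}(j)$. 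The main obstacle is step (4) — making precise that ``bar-conjugating the characteristic polynomial of $\mathbf{Q}$ yields the characteristic polynomial of $\overline{\mathbf{Q}}$'', which requires noting that the determinant is a polynomial expression in the matrix entries with integer coefficients and that bar-conjugation, being a ring homomorphism fixing $\mathbb{R} \subset \mathbb{Z}$, passes through it; and being careful that the variable $\lambda$ plays a neutral role, for which the cleanest statement is simply that the coefficients, as elements of $\mathbb{BC}$, are each fixed by bar-conjugation.
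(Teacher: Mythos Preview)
Your final argument is correct, and it is genuinely different from the paper's proof. The paper proceeds computationally via Newton's identities: it writes the coefficients $c_n$ of the characteristic polynomial in terms of the power-sum traces $s_n = \mathrm{Tr}(\mathbf{Q}^n)$, then shows by induction on the block structure that $\mathbf{Q}^{2q+1}$ is block anti-diagonal (so $s_{2q+1}=0$) while $\mathbf{Q}^{2q}$ has diagonal blocks $(\overline{\mathbf{M}}\mathbf{M})^q$ and $(\mathbf{M}\overline{\mathbf{M}})^q$, whose traces are bar-conjugates and hence sum to an element of $\mathbb{C}(j)$. Your route is more conceptual: you observe that $\overline{\mathbf{Q}} = \mathbf{P}\mathbf{Q}\mathbf{P}^{-1}$ with $\mathbf{P}$ the block swap, so $\mathbf{Q}$ and $\overline{\mathbf{Q}}$ share a characteristic polynomial, and since bar-conjugation is a ring automorphism of $\mathbb{BC}$ that passes through the Leibniz determinant formula, this forces $\overline{c_r}=c_r$ for every coefficient. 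Your approach is shorter, avoids the recursive trace machinery, and makes the underlying reason (bar-invariance via similarity) transparent; it also generalises immediately to any $\mathbf{Q}$ similar to its bar-conjugate. The paper's approach, on the other hand, yields as a by-product the extra structural fact that all odd-degree coefficients vanish (equivalently, the characteristic polynomial is a polynomial in $\lambda^2$), which foreshadows Theorem~\ref{the:eigMconjM}. Your initial block-determinant detour was not wasted either: combined with the identity $\det(\mu\mathbf{I}-\mathbf{M}\overline{\mathbf{M}})=\det(\mu\mathbf{I}-\overline{\mathbf{M}}\mathbf{M})$ it gives yet another proof, but the similarity argument you settled on is the cleanest of the three.
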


\begin{proof}
  According to~\cite{pipes2014applied} (section 3.21), the coefficients of the characteristic polynomial of a matrix can be found using a recursive formula based on the trace of matrices. The coefficients $c_1, c_2, \dots, c_n$ of the characteristic equation $\lambda^n + c_1 \lambda^{n-1} + \dots + c_{n-1} \lambda  + c_n = 0$ are
  \begin{equation}
    c_n = - \frac{1}{n} (c_{n-1} s_{1} + c_{n-2} s_{2} + \dots + c_{1} s_{n})
  \end{equation}
  where $s_n = Tr(\mathbf{Q}^n)$, $Tr()$ denoting the trace of the matrix.
  
  By induction is easy to show that
  \begin{equation}
    \mathbf{Q}^{2q+1} =
    \begin{bmatrix}
      0            & (\overline{\mathbf{M}}\mathbf{M})^{q} \overline{\mathbf{M}} \\
      (\mathbf{M}\overline{\mathbf{M}})^{q} \mathbf{M} & 0
    \end{bmatrix}
  \end{equation}
  and 
  \begin{equation}
    \mathbf{Q}^{2q} =
    \begin{bmatrix}
      (\overline{\mathbf{M}}\mathbf{M})^{q}  & 0 \\
      0 &  (\mathbf{M}\overline{\mathbf{M}})^{q}
    \end{bmatrix}
  \end{equation}
  for every positive integer $q$. Hence, for odd $n=2q+1$ it follows directly that $s_{2q+1} = 0$. For even $n=2q$:
  \begin{equation}
  \begin{split}
    s_{2q} & = Tr(\mathbf{Q}^{2q}) = Tr((\overline{\mathbf{M}}\mathbf{M})^{q}) + Tr((\mathbf{M}\overline{\mathbf{M}})^{q}) \\
    & = Tr(Re^i((\overline{\mathbf{M}}\mathbf{M})^{q}) + i Im^i((\overline{\mathbf{M}}\mathbf{M})^{q})) + Tr(Re^i((\mathbf{M}\overline{\mathbf{M}})^{q}) + i Im^i((\mathbf{M}\overline{\mathbf{M}})^{q})) \\
    & = Tr(Re^i((\overline{\mathbf{M}}\mathbf{M})^{q}+ i Im^i((\overline{\mathbf{M}}\mathbf{M})^{q})) + Tr(Re^i((\overline{\mathbf{M}}\mathbf{M})^{q}) - i Im^i((\overline{\mathbf{M}}\mathbf{M})^{q})) \\
    & = 2 Tr(Re^i((\overline{\mathbf{M}}\mathbf{M})^{q})) \in \mathbb{C}(j) .
  \end{split}
  \end{equation}
  
  Because $s_{n}$ lie in $\mathbb{C}(j)$ for every $n$, the proof that $c_n$ lie in $\mathbb{C}(j)$ is straightforward. For $n=1$, $c_{1}=-s_{1}=0 \in \mathbb{C}(j)$. Since $c_n$ is defined by a recursive formula, we conclude by induction that $c_n \in \mathbb{C}(j)$ for every $n$. 
\end{proof}

Therefore, the characteristic equation of the matrix $\mathbf{Q}$ lies in $\mathbb{C}(j)$ and we are interested in solutions in $\mathbb{C}(j)$, which are physically meaningful. Eigenvalues that have a component in $\mathbb{C}(i)$ are not physically meaningful and are not mathematically consistent with the derivation of equation~\ref{eq:eigenproblemQ}, because they do not satisfy equation~\ref{eq:conjtempansatz}. It should be noticed that the characteristic equation may have other solutions that do not lie in $\mathbb{C}(j)$ if solved in $\mathbb{BC}$, even if the coefficients all lie in $\mathbb{C}(j)$. However, these other solutions are not of interest to our application. For example, $\lambda^2+1=0$ have the solutions $\lambda=\pm j$ that are of interest and also the solutions $\lambda=\pm i$, that are not of interest.

Since the coefficients of the characteristic equation lie in $\mathbb{C}(j)$ it is easy to restrict ourselves to the solutions of interest and the eigenvalue problem can be solved by relying on the Fundamental Theorem of Algebra for complex numbers. Hence, there are $2N$ eigenvalues that belong to $\mathbb{C}(j)$ (when considering multiplicity), as expected according to the representation of section~\ref{sec:workcj}.

\begin{remark} \label{rem:eigenvaluesQ}
  All the $2N$ eigenvalues of interest (when considering multiplicity) of the matrix $\mathbf{Q}$ (defined in theorem~\ref{the:polynomialQ}) belong to $\mathbb{C}(j)$ and can be found by solving the characteristic equation of $\mathbf{Q}$ in $\mathbb{C}(j)$.
\end{remark}

Physically the eigenvalues of $\mathbf{R}$ and $\mathbf{Q}$ should be the same. This can also be proved mathematically.

\begin{theorem} \label{the:eigenvaluesRQ}
  If a matrix $\mathbf{Q} \in \mathbb{BC}^{2N \times 2N}$ can be written as
  \begin{equation}
    \mathbf{Q} =
    \begin{bmatrix}
      0            & \overline{\mathbf{M}} \\
      \mathbf{M} & 0
    \end{bmatrix}
  \end{equation}
  and a matrix $\mathbf{R} \in \mathbb{C}(j)^{2N \times 2N}$ can be written as
  \begin{equation}
    \mathbf{R} = \begin{bmatrix}
      Re^i(\mathbf{M}) & -Im^i(\mathbf{M}) \\
      -Im^i(\mathbf{M}) & -Re^i(\mathbf{M}) 
    \end{bmatrix}
  \end{equation}
  where $\mathbf{M} \in \mathbb{BC}^{N \times N}$ then all the eigenvalues in $\mathbb{C}(j)$ of $\mathbf{R}$ and $\mathbf{Q}$ are the same.
\end{theorem}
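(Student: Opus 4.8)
The plan is to exhibit an explicit invertible bicomplex matrix $\mathbf{T}\in\mathbb{BC}^{2N\times2N}$ with $\mathbf{R}=\mathbf{T}\,\mathbf{Q}\,\mathbf{T}^{-1}$, so that $\mathbf{R}$ and $\mathbf{Q}$ are similar over the commutative ring $\mathbb{BC}$ and therefore share one and the same characteristic polynomial; the equality of their $\mathbb{C}(j)$-eigenvalues then drops out. The matrix $\mathbf{T}$ is simply the change of variables from $(\mathbf{\hat{z}},\overline{\mathbf{\hat{z}}})$ to $(\mathbf{\hat{x}},\mathbf{\hat{y}})$ that already underlies the two derivations: since $\mathbf{\hat{x}}=Re^i(\mathbf{\hat{z}})=\tfrac12(\mathbf{\hat{z}}+\overline{\mathbf{\hat{z}}})$ and $\mathbf{\hat{y}}=Im^i(\mathbf{\hat{z}})=-\tfrac{i}{2}(\mathbf{\hat{z}}-\overline{\mathbf{\hat{z}}})$, while conversely $\mathbf{\hat{z}}=\mathbf{\hat{x}}+i\mathbf{\hat{y}}$ and $\overline{\mathbf{\hat{z}}}=\mathbf{\hat{x}}-i\mathbf{\hat{y}}$ (because $\mathbf{\hat{x}},\mathbf{\hat{y}}\in\mathbb{C}(j)^{N}$ are fixed by bar-conjugation), I would set
\begin{equation}
  \mathbf{T}=\begin{bmatrix} \tfrac12\mathbf{I} & \tfrac12\mathbf{I} \\ -\tfrac{i}{2}\mathbf{I} & \tfrac{i}{2}\mathbf{I}\end{bmatrix}, \qquad
  \mathbf{T}^{-1}=\begin{bmatrix}\mathbf{I} & i\mathbf{I} \\ \mathbf{I} & -i\mathbf{I}\end{bmatrix},
\end{equation}
with $\mathbf{I}$ the $N\times N$ identity, and first verify by a one-line block multiplication that $\mathbf{T}\,\mathbf{T}^{-1}=\mathbf{I}_{2N}$. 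This certifies that $\mathbf{T}$ is genuinely invertible in $\mathbb{BC}^{2N\times2N}$ and that $\det\mathbf{T}$ is a unit of $\mathbb{BC}$ — a point worth making explicitly because $\mathbb{BC}$ has zero divisors, so one cannot argue invertibility merely from a nonzero determinant.

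Next I would substitute $\mathbf{M}=Re^i(\mathbf{M})+i\,Im^i(\mathbf{M})$, hence $\overline{\mathbf{M}}=Re^i(\mathbf{M})-i\,Im^i(\mathbf{M})$, into $\mathbf{Q}$ and compute $\mathbf{T}\,\mathbf{Q}\,\mathbf{T}^{-1}$ blockwise. Using $\mathbf{M}+\overline{\mathbf{M}}=2\,Re^i(\mathbf{M})$ and $\tfrac{i}{2}(\mathbf{M}-\overline{\mathbf{M}})=i^2\,Im^i(\mathbf{M})=-Im^i(\mathbf{M})$, the four resulting blocks collapse to exactly $Re^i(\mathbf{M})$, $-Im^i(\mathbf{M})$, $-Im^i(\mathbf{M})$, $-Re^i(\mathbf{M})$, that is, $\mathbf{T}\,\mathbf{Q}\,\mathbf{T}^{-1}=\mathbf{R}$. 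This step is purely mechanical and, notably, never invokes commutativity of $Re^i(\mathbf{M})$ and $Im^i(\mathbf{M})$ — which is precisely why routing the argument through this similarity is cleaner than attempting to manipulate $\det(\lambda\mathbf{I}_{2N}-\mathbf{R})$ directly.

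With $\mathbf{R}=\mathbf{T}\,\mathbf{Q}\,\mathbf{T}^{-1}$ established, I would invoke that the determinant is multiplicative over any commutative ring, so
\begin{equation}
\begin{split}
  \det(\lambda\mathbf{I}_{2N}-\mathbf{R}) &= \det\!\big(\mathbf{T}(\lambda\mathbf{I}_{2N}-\mathbf{Q})\mathbf{T}^{-1}\big) \\
  &= \det(\mathbf{T})\,\det(\lambda\mathbf{I}_{2N}-\mathbf{Q})\,\det(\mathbf{T})^{-1} = \det(\lambda\mathbf{I}_{2N}-\mathbf{Q}) ,
\end{split}
\end{equation}
so $\mathbf{R}$ and $\mathbf{Q}$ have the same characteristic polynomial, whose coefficients lie in $\mathbb{C}(j)$ (by Theorem~\ref{the:polynomialQ}, and also simply because $\mathbf{R}\in\mathbb{C}(j)^{2N\times2N}$). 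Reading ``eigenvalue'' as ``root of the characteristic equation'', consistently with Remark~\ref{rem:eigenvaluesQ}, the eigenvalues of $\mathbf{R}$ lying in $\mathbb{C}(j)$ are precisely the roots in $\mathbb{C}(j)$ of this common polynomial, and likewise for $\mathbf{Q}$; hence the two multisets (with multiplicity) coincide. In particular, since $\mathbf{R}$ has entries in the field $\mathbb{C}(j)$ it has exactly $2N$ eigenvalues there, matching the $2N$ eigenvalues of interest of $\mathbf{Q}$ from Remark~\ref{rem:eigenvaluesQ}.

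The only point requiring care — rather than a genuine obstacle — is bookkeeping over the commutative ring $\mathbb{BC}$, which has zero divisors: one must use an \emph{explicit} inverse of $\mathbf{T}$, rely only on facts valid over commutative rings (multiplicativity of $\det$, invariance of the characteristic polynomial under conjugation by a unit), and keep ``eigenvalue'' synonymous with ``root of the characteristic polynomial'' as in Remark~\ref{rem:eigenvaluesQ}, since over $\mathbb{BC}$ the equivalence with the existence of a nonzero eigenvector is more delicate. None of these causes difficulty here, because $\mathbf{T}$ is invertible by construction and the common characteristic polynomial has all its coefficients in the field $\mathbb{C}(j)$.
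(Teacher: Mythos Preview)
Your proposal is correct and takes essentially the same approach as the paper: both establish the similarity $\mathbf{Q}=\mathbf{F}\mathbf{R}\mathbf{F}^{-1}$ via the very change-of-basis matrix you call $\mathbf{T}^{-1}$ (the paper's $\mathbf{F}$), and then conclude equality of eigenvalues. The only cosmetic difference is the endgame: the paper transports the Jordan form of $\mathbf{R}$ through $\mathbf{F}$, whereas you argue directly that similar matrices share the same characteristic polynomial---your route is arguably tidier over the non-field $\mathbb{BC}$, but the substance is identical.
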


\begin{proof}
  To prove theorem~\ref{the:eigenvaluesRQ}, we make use of the matrices $\mathbf{F}$ and $\mathbf{F}^{-1}$ used for conversion between the $\mathbf{\hat{z}}, \overline{\mathbf{\hat{z}}}$-system to the $\mathbf{\hat{x}}, \mathbf{\hat{y}}$-system:
  \begin{equation}
    \begin{bmatrix}
      \mathbf{\hat{z}} \\
      \overline{\mathbf{\hat{z}}}
    \end{bmatrix} 
    =
    \begin{bmatrix}
      \mathbf{I} &  i \mathbf{I} \\
      \mathbf{I} & -i \mathbf{I}
    \end{bmatrix}
    \begin{bmatrix}
      \mathbf{\hat{x}} \\
      \mathbf{\hat{y}}
    \end{bmatrix}
    =
    \mathbf{F}
    \begin{bmatrix}
      \mathbf{\hat{x}} \\
      \mathbf{\hat{y}}
    \end{bmatrix}
  \end{equation}
  and 
  \begin{equation}
    \begin{bmatrix}
      \mathbf{\hat{x}} \\
      \mathbf{\hat{y}}
    \end{bmatrix}
    =
    \begin{bmatrix}
      \frac{1}{2}\mathbf{I} & \frac{1}{2}\mathbf{I} \\
      -\frac{i}{2}\mathbf{I} & \frac{i}{2}\mathbf{I}
    \end{bmatrix}
    \begin{bmatrix}
      \mathbf{\hat{z}} \\
      \overline{\mathbf{\hat{z}}}
    \end{bmatrix}
    =
    \mathbf{F}^{-1}
    \begin{bmatrix}
      \mathbf{\hat{z}} \\
      \overline{\mathbf{\hat{z}}}
    \end{bmatrix}
  \end{equation}
  where $\mathbf{I} \in \mathbb{R}^{N \times N}$ is the identity matrix. It is easy to verify that
  \begin{equation}
    \mathbf{Q} = \mathbf{F} \mathbf{R} \mathbf{F}^{-1} .
  \end{equation}
  If the Jordan canonical form of $\mathbf{R}$ is given by $\mathbf{R} = \mathbf{V} \boldsymbol{\Lambda} \mathbf{V}^{-1}$, then the Jordan canonical form of $\mathbf{Q}$ is
  \begin{equation}
    \mathbf{Q} = \mathbf{F} \mathbf{V} \boldsymbol{\Lambda} \mathbf{V}^{-1} \mathbf{F}^{-1} = \mathbf{P} \boldsymbol{\Lambda} \mathbf{P}^{-1}.
  \end{equation}
  This proves that every eigenvalues of $\mathbf{R}$ is an eigenvalue of $\mathbf{Q}$ (with the same multiplicity). Since matrix $\mathbf{Q}$ has $2N$ eigenvalues when solving the characteristic equation in $\mathbb{C}(j)$ (remark~\ref{rem:eigenvaluesQ}), all the eigenvalues of $\mathbf{Q}$ that belong to $\mathbb{C}(j)$ are the same as $\mathbf{R}$ (with the same multiplicity).
\end{proof}

\begin{corollary} \label{cor:eigenvectorsRQ}
  If the matrices $\mathbf{R}$ and $\mathbf{Q}$ of theorem~\ref{the:eigenvaluesRQ} are diagonalizable, then the eigenvectors of $\mathbf{Q}$ are given by $\mathbf{P}=\mathbf{F} \mathbf{V}$, where $\mathbf{V}$ is the matrix of eigenvectors of the $\mathbf{R}$ and
  \begin{equation}
    \mathbf{F}
    =
    \begin{bmatrix}
      \mathbf{I} &  i \mathbf{I} \\
      \mathbf{I} & -i \mathbf{I}
    \end{bmatrix} .
  \end{equation} 
\end{corollary}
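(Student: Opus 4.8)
The plan is to recycle the similarity relation $\mathbf{Q} = \mathbf{F}\mathbf{R}\mathbf{F}^{-1}$ that was already established in the proof of Theorem~\ref{the:eigenvaluesRQ}, now keeping track of the eigenvectors and not merely the eigenvalues. First I would invoke the hypothesis that $\mathbf{R}$ is diagonalizable to write its eigendecomposition $\mathbf{R} = \mathbf{V}\boldsymbol{\Lambda}\mathbf{V}^{-1}$, with $\boldsymbol{\Lambda}$ diagonal and $\mathbf{V}$ invertible over $\mathbb{C}(j)$; by Theorem~\ref{the:eigenvaluesRQ} and Remark~\ref{rem:eigenvaluesQ} the diagonal entries of $\boldsymbol{\Lambda}$ (the eigenvalues of $\mathbf{R}$) all lie in $\mathbb{C}(j)$. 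Substituting into the similarity gives
\begin{equation}
  \mathbf{Q} = \mathbf{F}\mathbf{V}\boldsymbol{\Lambda}\mathbf{V}^{-1}\mathbf{F}^{-1} = (\mathbf{F}\mathbf{V})\,\boldsymbol{\Lambda}\,(\mathbf{F}\mathbf{V})^{-1},
\end{equation}
using that $\mathbf{F}$ has the explicit two-sided inverse $\mathbf{F}^{-1}$ displayed in the proof of Theorem~\ref{the:eigenvaluesRQ}, together with the reversal law $(\mathbf{F}\mathbf{V})^{-1} = \mathbf{V}^{-1}\mathbf{F}^{-1}$.

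From here the conclusion is immediate: setting $\mathbf{P} = \mathbf{F}\mathbf{V}$, the relation above reads $\mathbf{Q}\mathbf{P} = \mathbf{P}\boldsymbol{\Lambda}$, so that column $k$ of $\mathbf{P}$ satisfies $\mathbf{Q}(\mathbf{P}e_k) = \boldsymbol{\Lambda}_{kk}(\mathbf{P}e_k)$ and is therefore an eigenvector of $\mathbf{Q}$ with eigenvalue $\boldsymbol{\Lambda}_{kk}\in\mathbb{C}(j)$. Since $\mathbf{F}$ and $\mathbf{V}$ are both invertible, $\mathbf{P}$ is invertible, so these $2N$ eigenvectors form a basis; this also shows that diagonalizability of $\mathbf{R}$ already forces diagonalizability of $\mathbf{Q}$ with the same $\boldsymbol{\Lambda}$, consistent with the statement. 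I would then simply record $\mathbf{P} = \mathbf{F}\mathbf{V}$ with the stated form of $\mathbf{F}$ as the matrix of eigenvectors of $\mathbf{Q}$.

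There is no genuine obstacle here; the corollary is essentially a corollary of the construction inside Theorem~\ref{the:eigenvaluesRQ}, and the only thing needing a word of care is the algebraic bookkeeping over the commutative ring $\mathbb{BC}$. One should check explicitly that $\mathbf{F}\mathbf{F}^{-1} = \mathbf{F}^{-1}\mathbf{F} = \mathbf{I}$ (so that no zero-divisor pathology of $\mathbb{BC}$ intrudes and the transformation is a bona fide similarity) and that the reversal law for inverses is used correctly; both are routine. I would also note, for the reader's benefit, that the compatibility of the columns of $\mathbf{P}$ with the $\mathbf{\hat{z}},\overline{\mathbf{\hat{z}}}$ block structure is automatic from $\mathbf{P} = \mathbf{F}\mathbf{V}$ and the explicit form of $\mathbf{F}$, so no rescaling of eigenvectors is required.
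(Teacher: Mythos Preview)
Your proposal is correct and mirrors the paper's approach exactly: the paper's proof simply states that the corollary follows directly from the proof of Theorem~\ref{the:eigenvaluesRQ}, which had already exhibited the relation $\mathbf{Q} = \mathbf{F}\mathbf{V}\boldsymbol{\Lambda}\mathbf{V}^{-1}\mathbf{F}^{-1} = \mathbf{P}\boldsymbol{\Lambda}\mathbf{P}^{-1}$. Your additional remarks on invertibility over $\mathbb{BC}$ and the reversal law are sound but go beyond what the paper bothers to spell out.
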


\begin{proof}
  It follows directly from the proof of theorem~\ref{the:eigenvaluesRQ}.
\end{proof}

The eigenvectors of $\mathbf{R}$ lie in $\mathbb{C}(j)$. From corollary~\ref{cor:eigenvectorsRQ} we note that the eigenvectors of $\mathbf{Q}$ lie in $\mathbb{BC}$. To find the eigenvectors of $\mathbf{Q}$ we then have to solve a degenerated linear system in $\mathbb{BC}$. Since a solution is known to exist from theorem~\ref{cor:eigenvectorsRQ}, the loss of the Fundamental Theorem of Algebra is not an issue, and obtaining the eigenvectors should be straightforward.

\subsection{Third approach: second-order system} \label{sec:worksecond}
In analogy to the method in chapter 7.5 of Saffman\cite{saffman1992vortex}, the second derivative of $\mathbf{\hat{z}}$ can be derived from equations~\ref{eq:framelinearsystem} and~\ref{eq:Azconj} to find the second-order linear system
\begin{equation}
  \frac{\partial^2 \mathbf{\hat{z}}}{\partial t^2} = \overline{\mathbf{M}} \, \frac{\partial \overline{\mathbf{\hat{z}}}}{\partial t} = \overline{\mathbf{M}} \mathbf{M} \, \mathbf{\hat{z}}
\end{equation}

The procedure of~\cite{saffman1992vortex} is equivalent to using the square root of the eigenvalues of $\overline{\mathbf{M}} \mathbf{M}$ to study the linear system. In theorem~\ref{the:eigMconjM} we show that the eigenvalues of $\overline{\mathbf{M}} \mathbf{M}$ are the square of the eigenvalues of $\mathbf{Q}$.

\begin{theorem} \label{the:eigMconjM}
  For a matrix $\mathbf{Q} \in \mathbb{BC}^{2N \times 2N}$ that can be written as
  \begin{equation}
    \mathbf{Q} =
    \begin{bmatrix}
      0            & \overline{\mathbf{M}} \\
      \mathbf{M} & 0
    \end{bmatrix} ,
  \end{equation}
  where $\mathbf{M} \in \mathbb{BC}^{N \times N}$, the eigenvalues of $\mathbf{Q}$ come in pairs in the form $\pm\lambda_1, \pm\lambda_2, \dots, \pm\lambda_N$ and the eigenvalues of $\overline{\mathbf{M}} \mathbf{M}$ are $\lambda_1^2, \lambda_2^2, \dots, \lambda_N^2$.
\end{theorem}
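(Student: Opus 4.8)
The plan is to identify the characteristic polynomial of $\mathbf{Q}$, as a polynomial in $\lambda$, with the characteristic polynomial of $\overline{\mathbf{M}}\mathbf{M}$ evaluated at $\lambda^{2}$; the whole statement then drops out. Write $q(\mu) \coloneqq \det(\mu\mathbf{I}_{N}-\overline{\mathbf{M}}\mathbf{M})$ for the characteristic polynomial of $\overline{\mathbf{M}}\mathbf{M}$. Exactly as in the proof of theorem~\ref{the:polynomialQ}, the Faddeev--LeVerrier recursion of~\cite{pipes2014applied} applied to $\overline{\mathbf{M}}\mathbf{M}$, together with the identity $Tr((\overline{\mathbf{M}}\mathbf{M})^{q}) = 2\,Tr(Re^{i}((\overline{\mathbf{M}}\mathbf{M})^{q})) \in \mathbb{C}(j)$ already used there, shows that every coefficient of $q$ lies in $\mathbb{C}(j)$; since $\mathbb{C}(j)\cong\mathbb{C}$, $q$ has exactly $N$ roots $\mu_{1},\dots,\mu_{N}\in\mathbb{C}(j)$ (with multiplicity), and these are the eigenvalues of $\overline{\mathbf{M}}\mathbf{M}$. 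Fix $\lambda_{k}\in\mathbb{C}(j)$ with $\lambda_{k}^{2}=\mu_{k}$.

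The central claim is the polynomial identity
\begin{equation}
  \det(\lambda\mathbf{I}_{2N}-\mathbf{Q}) = \det(\lambda^{2}\mathbf{I}_{N}-\overline{\mathbf{M}}\mathbf{M}) = q(\lambda^{2}).
\end{equation}
To prove it I would first restrict to $\lambda\in\mathbb{C}(j)\setminus\{0\}$. Any nonzero element of $\mathbb{C}(j)$ is a unit of $\mathbb{BC}$ (its inverse already lies in the subfield $\mathbb{C}(j)$), so $\lambda\mathbf{I}_{N}$ is invertible in $\mathbb{BC}^{N\times N}$, and the Schur-complement determinant formula $\det\!\begin{bmatrix}A & B\\ C & D\end{bmatrix}=\det(A)\det(D-CA^{-1}B)$, valid over any commutative ring when $A$ is invertible, applied to $\det\!\begin{bmatrix}\lambda\mathbf{I}_{N} & -\overline{\mathbf{M}}\\ -\mathbf{M} & \lambda\mathbf{I}_{N}\end{bmatrix}$ gives $\lambda^{N}\det(\lambda\mathbf{I}_{N}-\lambda^{-1}\overline{\mathbf{M}}\mathbf{M})=\det(\lambda^{2}\mathbf{I}_{N}-\overline{\mathbf{M}}\mathbf{M})$ after pulling $\lambda^{-1}$ out of the $N$ columns. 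Both sides of the claimed identity have coefficients in $\mathbb{C}(j)$ — the left by theorem~\ref{the:polynomialQ}, the right because $q$ does — and they agree on the infinite subset $\mathbb{C}(j)\setminus\{0\}$ of the field $\mathbb{C}(j)$, hence they agree as elements of $\mathbb{C}(j)[\lambda]$, and a fortiori in $\mathbb{BC}[\lambda]$. (Eliminating the opposite block instead yields $\det(\lambda\mathbf{I}_{2N}-\mathbf{Q})=\det(\lambda^{2}\mathbf{I}_{N}-\mathbf{M}\overline{\mathbf{M}})$, which re-derives that $\overline{\mathbf{M}}\mathbf{M}$ and $\mathbf{M}\overline{\mathbf{M}}$ share a characteristic polynomial; the block structure is also visible directly from $\mathbf{Q}^{2}=\mathrm{diag}(\overline{\mathbf{M}}\mathbf{M},\mathbf{M}\overline{\mathbf{M}})$, already computed for $q=1$ in the proof of theorem~\ref{the:polynomialQ}.)

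With the identity established, the roots in $\mathbb{C}(j)$ of $\det(\lambda\mathbf{I}_{2N}-\mathbf{Q})=q(\lambda^{2})$ are precisely $\pm\lambda_{1},\dots,\pm\lambda_{N}$ counted with multiplicity — $2N$ values in all, consistent with remark~\ref{rem:eigenvaluesQ} — so the eigenvalues of $\mathbf{Q}$ occur in the pairs $\pm\lambda_{k}$ and $\lambda_{k}^{2}=\mu_{k}$ ranges over the eigenvalues of $\overline{\mathbf{M}}\mathbf{M}$, which is the assertion. (The $\pm$ pairing alone also follows at once from $q(\lambda^{2})$ containing only even powers of $\lambda$, i.e.\ from the vanishing of the odd-index coefficients shown in theorem~\ref{the:polynomialQ}.) The step I expect to need the most care is the justification of the characteristic-polynomial identity over $\mathbb{BC}$: since $\mathbb{BC}$ is not an integral domain one cannot argue by factoring polynomials there, so the whole argument has to be funnelled through the field $\mathbb{C}(j)$ — which is precisely why it matters that theorem~\ref{the:polynomialQ} and the trace computation in its proof place all the coefficients in $\mathbb{C}(j)$ — and one must separately check that nonzero scalars of $\mathbb{C}(j)$ are genuine units of $\mathbb{BC}$, rather than zero divisors, so that the Schur-complement manipulation is legitimate.
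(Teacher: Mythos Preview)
Your approach is essentially the paper's: both establish $\det(\lambda\mathbf{I}_{2N}-\mathbf{Q})=\det(\lambda^{2}\mathbf{I}_{N}-\overline{\mathbf{M}}\mathbf{M})$ and read off the pairing and the squared eigenvalues. The paper reaches that identity in one line via the block-determinant formula for \emph{commuting} blocks~\cite{silvester2000determinants} (here $-\lambda\mathbf{I}_{N}$ commutes with $\mathbf{M}$), which is valid over any commutative ring and therefore sidesteps your invertibility restriction on $\lambda$ and the subsequent polynomial-identity extension through $\mathbb{C}(j)$; your Schur-complement route is correct but longer for that reason. One small slip: the identity actually proved in theorem~\ref{the:polynomialQ} is $Tr((\overline{\mathbf{M}}\mathbf{M})^{q})+Tr((\mathbf{M}\overline{\mathbf{M}})^{q})=2\,Tr(Re^{i}((\overline{\mathbf{M}}\mathbf{M})^{q}))$, not the single-trace version you quote; to conclude $Tr((\overline{\mathbf{M}}\mathbf{M})^{q})\in\mathbb{C}(j)$ you need in addition the cyclic property $Tr((\overline{\mathbf{M}}\mathbf{M})^{q})=Tr((\mathbf{M}\overline{\mathbf{M}})^{q})$, after which the correct statement is $Tr((\overline{\mathbf{M}}\mathbf{M})^{q})=Tr(Re^{i}((\overline{\mathbf{M}}\mathbf{M})^{q}))$ (no factor of $2$).
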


\begin{proof}
  The eigenvalues of $\mathbf{Q}$ are given by $\lambda$ that solves the equation
  \begin{equation}
    \det
    \begin{bmatrix}
      - \lambda \mathbf{I} & \overline{\mathbf{M}} \\
      \mathbf{M}           & - \lambda \mathbf{I} 
    \end{bmatrix} = 0 \iff
    \det (\lambda^2 \mathbf{I} - \overline{\mathbf{M}} \mathbf{M}) = 0
  \end{equation}
  where the fact that $\mathbf{I}$ and $\mathbf{M}$ commute and the properties of the determinant of block matrices~\cite{silvester2000determinants} were used. This shows that $-\lambda$ is a eigenvalue of $\mathbf{Q}$ if $\lambda$ is a eigenvalue. Also, it follows directly that $\lambda^2$ is an eigenvalue of $\overline{\mathbf{M}} \mathbf{M}$.
\end{proof}

The eigendecomposition of the matrix $\overline{\mathbf{M}} \mathbf{M}$ gives the information about the eigenvalues of $\mathbf{Q}$ and $\mathbf{R}$. However, it can not give complete information about the eigenvectors, since its dimension is $N$, smaller than the full linear system. A second-order linear system can be obtained from equation~\ref{eq:Qsystem}
\begin{equation}
  \frac{\partial^2 }{\partial t^2} \begin{bmatrix}
    \mathbf{\hat{z}} \\
    \overline{\mathbf{\hat{z}}}
  \end{bmatrix}
  =
  \mathbf{Q}^2
  \begin{bmatrix}
    \mathbf{\hat{z}} \\
    \overline{\mathbf{\hat{z}}}
  \end{bmatrix}
  =
  \begin{bmatrix}
    \overline{\mathbf{M}} \mathbf{M} & 0 \\
    0 & \mathbf{M} \overline{\mathbf{M}}
  \end{bmatrix}
  \begin{bmatrix}
    \mathbf{\hat{z}} \\
    \overline{\mathbf{\hat{z}}}
  \end{bmatrix} .
\end{equation}

The matrix $\mathbf{Q}^2$ can also be used to find the eigenvalues of $\mathbf{Q}$, since the eigenvalues $\mu$ of $\mathbf{Q}^2$ are the same as the combination of the eigenvalues of $\overline{\mathbf{M}} \mathbf{M}$ and $\mathbf{M} \overline{\mathbf{M}}$ because
\begin{equation}
  \det \left(\mathbf{Q}^2 - \mu \mathbf{I}^{2N \times 2N} \right) = \det \left(\mathbf{M} \overline{\mathbf{M}} - \mu \mathbf{I}^{N \times N}\right) \det \left(\overline{\mathbf{M}} \mathbf{M} - \mu \mathbf{I}^{N \times N}\right)
\end{equation}
where the properties of the determinant of block matrices were used. However, the eigenvalues of $\mathbf{M} \overline{\mathbf{M}}$ are the same as the eigenvalues of $\overline{\mathbf{M}} \mathbf{M}$, as can be shown by
\begin{equation}
  \det (\mathbf{M} \overline{\mathbf{M}} - \mu \mathbf{I}) = 0 \iff \overline{\det (\mathbf{M} \overline{\mathbf{M}} - \mu \mathbf{I})} = 0 \iff \det (\overline{\mathbf{M}} \mathbf{M} - \mu \mathbf{I}) = 0
\end{equation}
where the property that $\mu=\lambda^2 \in \mathbb{C}(j)$ was used. Hence, the eigenvalues of $\mathbf{Q}^2$ are $\mu_1=\lambda_1^2, \mu_2=\lambda_2^2, \dots, \mu_N=\lambda_N^2$, with multiplicity 2 (assuming $\lambda_p \neq \lambda_q$ for $p \neq q$). Thus, the eigenvalues of $\mathbf{Q}^2$ are not unique and the diagonalization of $\mathbf{Q}^2$ is not unique (if it were unique, the eigenvectors of $\mathbf{Q}^2$ would be the same of $\mathbf{Q}$, since $\mathbf{Q}^2=\mathbf{P} \boldsymbol{\Lambda}^2 \mathbf{P}^{-1}$). Therefore, the eigenvalues of $\mathbf{Q}^2$ can be used to find the eigenvalues of $\mathbf{Q}$ but the eigenvectors of $\mathbf{Q}^2$ should not be used to calculate the eigenvectors of $\mathbf{Q}$.

\section{Stability of classical configurations} \label{sec:stability}

\subsection{Stability of the von Kármán vortex street} \label{sec:stabilitykarman}

The von Kármán vortex street, shown in figure~\ref{fig:vortexstreet} is defined by the distance $a$ between the vortices of the same row and the complex distance $d \in \mathbb{C}(i)$ between vortices in different rows. For the symmetric von Kármán vortex street, $d=i ah$, where $h \in \mathbb{R}$ is the normalized distance in the $y$-direction. For the staggered von Kármán vortex street, $d=(1/2 + i h)a$. The complex position of each undisturbed vortex is:
\begin{equation}
  \begin{split}
    z_{1m}^{0} = a m \\
    z_{2n}^{0} = a n + d
  \end{split}
\end{equation}
where $z_{1m}^{0}, z_{2n}^{0} \in \mathbb{C}(i)$, the superscript $^{0}$ indicates undisturbed values and $m$, $n$ are the indices of the vortices within each row. The circulation of each vortex is
\begin{equation}
  \begin{split}
    \kappa_{1m} = \kappa \\
    \kappa_{2n} = -\kappa
  \end{split}
\end{equation}
where $\kappa \in \mathbb{R}$ for the configuration studied in this paper. In the general case, $\kappa \in \mathbb{C}(i)$ can be used to study configurations including vortices, sinks and sources.

The complex potential at a position $z$ is
\begin{equation}
  \Phi(z) =  - \sideset{}{^*} \sum_{m=-\infty}^{+\infty} \frac{i \kappa_{1m}}{2 \pi} \log(z-z_{1m}) - \sideset{}{^*} \sum_{n=-\infty}^{+\infty} \frac{i \kappa_{2n}}{2 \pi} \log(z-z_{2n})
\end{equation}
where the symbol $^*$ in the sum indicates that the term $\frac{i \kappa_{1m}}{2 \pi} \log(z-z_{1m})$ is avoided if $z=z_{1m}$ and the term $\frac{i \kappa_{2n}}{2 \pi} \log(z-z_{2n})$ is avoided for $z=z_{2n}$ (only one of these terms can be avoided). The complex velocity $w=u - i v$ is given by
\begin{equation}
  \frac{\partial \overline{z}}{\partial t} = w = \frac{\partial \Phi}{\partial z} = - \sideset{}{^*} \sum_{m=-\infty}^{+\infty} \frac{i \kappa_{1m}}{2 \pi} \frac{1}{z-z_{1m}} - \sideset{}{^*} \sum_{n=-\infty}^{+\infty} \frac{i \kappa_{2n}}{2 \pi} \frac{1}{z-z_{2n}}
  \label{eq:completew}
\end{equation}
imposing a disturbance to the position of the vortices:
\begin{equation}
  \begin{split}
    z = z^{0} + z' \\
    z_{1m} = z_{1m}^{0} + z_{1m}' \\
    z_{2n} = z_{2n}^{0} + z_{2n}'
  \end{split}
\end{equation}
and linearizing equation~\ref{eq:completew}, the complex velocity is given by
\begin{equation}
  \begin{split}
  \frac{\partial \overline{z}}{\partial t} = \frac{\partial \overline{z^{0}}}{\partial t}  + \frac{\partial \overline{z'}}{\partial t} \approx & \left( - \sideset{}{^*} \sum_{m=-\infty}^{+\infty}  \frac{i \kappa_{1m}}{2 \pi} \frac{1}{z^{0}-z_{1m}^{0}} - \sideset{}{^*} \sum_{n=-\infty}^{+\infty}  \frac{i \kappa_{2n}}{2 \pi} \frac{1}{z^{0}-z_{2n}^{0}}  \right) \\
  & + \left( \sideset{}{^*} \sum_{m=-\infty}^{+\infty} \frac{i \kappa_{1m}}{2 \pi} \frac{z'-z_{1m}'}{(z^{0}-z_{1m}^{0})^2} + \sideset{}{^*} \sum_{n=-\infty}^{+\infty} \frac{i \kappa_{2n}}{2 \pi} \frac{z'-z_{2n}'}{(z^{0}-z_{2n}^{0})^2}  \right) .
  \end{split}
\end{equation}
If the undisturbed configuration is known to be a steady solution:
\begin{equation}
  \frac{\partial \overline{z^{0}}}{\partial t} = - \sideset{}{^*} \sum_{m=-\infty}^{+\infty}  \frac{i \kappa_{1m}}{2 \pi} \frac{1}{z^{0}-z_{1m}^{0}} - \sideset{}{^*} \sum_{n=-\infty}^{+\infty}  \frac{i \kappa_{2n}}{2 \pi} \frac{1}{z^{0}-z_{2n}^{0}} = 0
\end{equation}
for all the vortices. Hence, the linearized version of equation~\ref{eq:completew} becomes
\begin{equation}
  \frac{\partial \overline{z'}}{\partial t} = \sideset{}{^*} \sum_{m=-\infty}^{+\infty} \frac{i \kappa_{1m}}{2 \pi} \frac{z'-z_{1m}'}{(z^{0}-z_{1m}^{0})^2} + \sideset{}{^*} \sum_{n=-\infty}^{+\infty} \frac{i \kappa_{2n}}{2 \pi} \frac{z'-z_{2n}'}{(z^{0}-z_{2n}^{0})^2}  .
  \label{eq:linearw}
\end{equation}
Which shows that the details of the steady solution do not need to be explicitly known for the stability problem. Only the knowledge of the position of the vortices and the fact that it is a steady configuration is sufficient.

A complex exponential function can be imposed as disturbance without loss of generality, because these are the components of the Fourier series in exponential form of a general disturbance. However, if the complex exponential \emph{ansatz} lies in $\mathbb{C}(i)$, we would be using the complex unit $i$ to represent both the sinusoidal component of the perturbation and the $y$-direction of the complex velocity and positions. To avoid, this, we introduce the other complex unit $j$, defining the perturbation so that the complex exponential lie in $\mathbb{C}(j)$:
\begin{equation}
  \begin{split}
  z_{1m}' = \hat{z_{1}} e^{2 \pi p m j} \\
  z_{2n}' = \hat{z_{2}} e^{2 \pi p n j} .
  \label{eq:ansatz}
  \end{split}
\end{equation}
The $x$ and $y$ components of $\hat{z_{1}}$ are:
\begin{itemize}
    \item $Re^i(\hat{z_{1}}) = \hat{x_{1}} \in \mathbb{C}(j)$
    \item $Im^i(\hat{z_{1}}) = \hat{y_{1}} \in \mathbb{C}(j)$ .
\end{itemize}
Hence, the expanded form of $z_{1m}'$ is:
\begin{equation}
  z_{1m}' = \hat{x_{1}} \cos{(2 \pi p m)} + i \hat{y_{1}} \cos{(2 \pi p m)} + j \hat{x_{1}} \sin{(2 \pi p m)} + ij \hat{y_{1}} \sin{(2 \pi p m)}
\end{equation}
and analogously for $z_{2n}'$. It should be remembered that $\hat{x_{1}}$ and $\hat{y_{1}}$ could have a component in $j$, if there is a phase difference between $\hat{x_{1}}$ and $\hat{y_{1}}$ or between $\hat{z_{1}}$ and $\hat{z_{2}}$.

Substituting the \emph{ansatz} of equation~\ref{eq:ansatz}, the position of the undisturbed vortices and the circulation, equation~\ref{eq:linearw} for vortices with indices $m=n=0$ (without loss of generality) becomes 
\begin{equation}
  \frac{\partial \overline{\hat{z_{1}}}}{\partial t} = \frac{i \kappa}{2 \pi} \left( \sideset{}{'} \sum_{m=-\infty}^{+\infty} \frac{\hat{z_{1}}-\hat{z_{1}} e^{2 \pi p m j}}{(a m)^2} - \sum_{n=-\infty}^{+\infty} \frac{\hat{z_{1}}-\hat{z_{2}} e^{2 \pi p n j}}{(a n + d)^2}  \right)
\end{equation}
\begin{equation}
  \frac{\partial \overline{\hat{z_{2}}}}{\partial t} = \frac{i \kappa}{2 \pi} \left( \sum_{m=-\infty}^{+\infty} \frac{\hat{z_{2}}-\hat{z_{1}} e^{2 \pi p m j}}{(a m-d)^2} - \sideset{}{'} \sum_{n=-\infty}^{+\infty}  \frac{\hat{z_{2}}-\hat{z_{2}} e^{2 \pi p n j}}{(a n)^2}  \right) ,
\end{equation}
where the symbol $'$ in the sum indicates that the term $m,n=0$ is avoided. Defining:
\begin{equation}
  k=\frac{d}{a}
\end{equation}
we have
\begin{equation}
  \frac{\partial \overline{\hat{z_{1}}}}{\partial t} = \frac{i \kappa}{2 a^2 \pi} \left( \left[ \sideset{}{'} \sum_{m=-\infty}^{+\infty} \frac{1-e^{2 \pi p m j}}{m^2}  - \sum_{n=-\infty}^{+\infty} \frac{1}{(n + k)^2} \right] \hat{z_{1}} + \sum_{n=-\infty}^{+\infty} \frac{e^{2 \pi p n j}}{(n + k)^2}\hat{z_{2}}  \right)
\end{equation}
\begin{equation}
  \frac{\partial \overline{\hat{z_{2}}}}{\partial t} = \frac{i \kappa}{2 a^2 \pi} \left( \sum_{m=-\infty}^{+\infty}  \frac{- e^{2 \pi p m j}}{(m-k)^2} \hat{z_{1}}  + \left[ \sum_{m=-\infty}^{+\infty}  \frac{1}{(m-k)^2} - \sideset{}{'} \sum_{n=-\infty}^{+\infty}  \frac{1-e^{2 \pi p n j}}{n^2} \right] \hat{z_{2}}  \right) .
\end{equation}

Using the following identities for $0 \le p \le 1$ (that can derived from the identities presented in section 156 of~\cite{lamb1932hydrodynamics}):
\begin{equation}
   \sideset{}{'} \sum_{n=-\infty}^{+\infty} \frac{1-e^{2 \pi p n j}}{n^2} = 2 \pi^2 p (1 - p)
\end{equation}
\begin{equation}
   \sum_{n=-\infty}^{+\infty} \frac{1}{(n + k)^2} = \frac{\pi^2}{\sin^2(\pi k)}
\end{equation}
\begin{equation}
   \sum_{n=-\infty}^{+\infty} \frac{e^{2 \pi p n j}}{(n + k)^2} = \frac{\pi^2 e^{-2 p \pi k j}}{\sin^2(\pi k)} + j \frac{2 \pi^2 p e^{(1-2p) \pi k j}}{\sin(\pi k)}
\end{equation}
and defining $\delta=\pi k$, we arrive at

\begin{equation}
  \frac{\partial \overline{\hat{z_{1}}}}{\partial t} = \frac{i \pi \kappa}{2 a^2} \left( \left[ 2 p (1 - p) - \frac{1}{\sin^2(\delta)} \right] \hat{z_{1}} + \left[ \frac{e^{-2 p \delta j}}{\sin^2(\delta)} + j \frac{2 p e^{(1-2p) \delta j}}{\sin(\delta)} \right]\hat{z_{2}}  \right)
\end{equation}
\begin{equation}
  \frac{\partial \overline{\hat{z_{2}}}}{\partial t} = \frac{i \pi \kappa}{2 a^2} \left( \left[ - \frac{e^{2 p \delta j}}{\sin^2(\delta)} + j \frac{2 p e^{-(1-2p) \delta j}}{\sin(\delta)} \right] \hat{z_{1}}  + \left[ - 2 p (1 - p) + \frac{1}{\sin^2(\delta)} \right] \hat{z_{2}}  \right) .
\end{equation}
Writing in vector form
\begin{equation}
  \mathbf{\hat{z}} = 
  \begin{bmatrix}
    \hat{z_{1}} \\
    \hat{z_{2}}
  \end{bmatrix}
\end{equation}
the system is
\begin{equation}
  \frac{\partial \overline{\mathbf{\hat{z}}}}{\partial t} = \mathbf{M} \, \mathbf{\hat{z}}
  \label{eq:Az}
\end{equation}
where
\begin{equation}
  \mathbf{M} = \frac{i \pi \kappa}{2 a^2}
  \begin{bmatrix}
    2 p (1 - p) - \frac{1}{\sin^2(\delta)} & \frac{e^{-2 p \delta j}}{\sin^2(\delta)} + j \frac{2 p e^{(1-2p) \delta j}}{\sin(\delta)} \\
    - \frac{e^{2 p \delta j}}{\sin^2(\delta)} + j \frac{2 p e^{-(1-2p) \delta j}}{\sin(\delta)} & - 2 p (1 - p) + \frac{1}{\sin^2(\delta)} 
  \end{bmatrix} .
  \label{eq:M}
\end{equation}

The full linear system in the bicomplex formulation is then
\begin{equation}
  \frac{\partial }{\partial t} \begin{bmatrix}
    \mathbf{\hat{z}} \\
    \overline{\mathbf{\hat{z}}}
  \end{bmatrix}
   =
  \begin{bmatrix}
    0            & \overline{\mathbf{M}} \\
    \mathbf{M} & 0
  \end{bmatrix}
  \begin{bmatrix}
    \mathbf{\hat{z}} \\
    \overline{\mathbf{\hat{z}}}
  \end{bmatrix} .
  \label{eq:generalizedsystem}
\end{equation}

This relationship is the generalized linearized dynamical system, valid for every value of $d$. The classical results from literature for the symmetric and staggered von Kármán vortex streets are just particular cases of this generalized formulation. In sections~\ref{sec:stability}\ref{sec:stabilitykarman}(\ref{sec:stabilitykarmansymmetric}) and (\ref{sec:stabilitykarmanstaggered}), using the tools and approaches detailed in section~\ref{sec:framework}, it is shown that equations~\ref{eq:Az} and~\ref{eq:generalizedsystem} give the same results from the literature when using $\mathbf{M}$ defined in equation~\ref{eq:M}.

The two known steady cases are the symmetric and staggered von Kármán vortex streets. But this relationship would also be valid for other values of $d$. For example, if fixed vortex and source sheets are included between the two rows of vortices, a static configuration is possible for every value of $d$, however, this would be a mathematically constructed problem that does not have a corresponding real-world flow known by these authors. Nevertheless, part of the asymmetric reverse vortex street, formed on certain conditions by oscillating airfoils~\cite{jones1998experimental,godoy2009model,dynnikova2021stability}, resembles an inclined vortex street with $d \ne (1/2 + ih)a$. This problem is not treated here, though.


These equations are also applicable to two rows of sources. In the complex representation of potential flows, a source can be considered a vortex with imaginary circulation and a vortex can be considered a source with imaginary strength. Hence, the dynamical system that represents two rows of sources can be directly obtained by considering an imaginary value for $\kappa$.

The bicomplex formulation is paramount to arriving at equation~\ref{eq:M}, as can be inferred from the presence of complex units $j$ and $i$ (explicitly and in $\delta$). This generalized formulation of the stability of the von Kármán vortex street, to the best of the authors' knowledge, is first presented in this work.

\subsubsection{Symmetric von Kármán vortex street} \label{sec:stabilitykarmansymmetric}

First, let's compare the system to the results of a symmetric von Kármán vortex street. In this case, $\delta = \pi k=i \pi h$, where $h \in \mathbb{R}$:
\begin{equation}
  \mathbf{M} = \frac{i \pi \kappa}{2 a^2}
  \begin{bmatrix}
    2 p (1 - p) + \frac{1}{\sinh^2(\pi h)} & - \frac{e^{-2 p \pi h i j}}{\sinh^2(\pi h)} - i j \frac{2 p e^{(1-2p) \pi h i j}}{\sinh(\pi h)} \\
    \frac{e^{2 p \pi h i j}}{\sinh^2(\pi h)} - i j \frac{2 p e^{-(1-2p) \pi h i j}}{\sinh(\pi h)} & - 2 p (1 - p) - \frac{1}{\sinh^2(\pi h)} 
  \end{bmatrix}
\end{equation}
\begin{equation}
  Re^i(\mathbf{M}) = \frac{\pi \kappa}{2 a^2}
  \begin{bmatrix}
    0 & - j \frac{\sinh{(2 p \pi h)}}{\sinh^2(\pi h)} + j \frac{2 p \cosh{((1-2p) \pi h)}}{\sinh(\pi h)} \\
    -j \frac{\sinh{(2 p \pi h)}}{\sinh^2(\pi h)} + j \frac{2 p \cosh{((1-2p) \pi h)}}{\sinh(\pi h)} & 0 
  \end{bmatrix}
\end{equation}
\begin{equation}
  Im^i(\mathbf{M}) = \frac{\pi \kappa}{2 a^2}
  \begin{bmatrix}
    2 p (1 - p) + \frac{1}{\sinh^2(\pi h)} & - \frac{\cosh{(2 p \pi h)}}{\sinh^2(\pi h)} - \frac{2 p \sinh{((1-2p) \pi h)}}{\sinh(\pi h)} \\
    \frac{\cosh{(2 p \pi h)}}{\sinh^2(\pi h)} + \frac{2 p \sinh{((1-2p) \pi h)}}{\sinh(\pi h)} & - 2 p (1 - p) - \frac{1}{\sinh^2(\pi h)} 
  \end{bmatrix}
\end{equation}

Following~\cite{lamb1932hydrodynamics}, we denote
\begin{equation}
  A = 2 p (1 - p) + \frac{1}{\sinh^2(\pi h)}
\end{equation}
\begin{equation}
  B =  j \left[ - \frac{\sinh{(2 p \pi h)}}{\sinh^2(\pi h)} + \frac{2 p \cosh{((1-2p) \pi h)}}{\sinh(\pi h)} \right]
\end{equation}
\begin{equation}
  C =  -\frac{\cosh{(2 p \pi h)}}{\sinh^2(\pi h)} - \frac{2 p \sinh{((1-2p) \pi h)}}{\sinh(\pi h)} .
\end{equation}

The symmetry of the terms $B$ and $C$, formed by functions $\cosh$ and $\sinh$, was already evident in~\cite{lamb1932hydrodynamics}. However, using bicomplex numbers we can interpret these terms as the $Re^i$ and $Im^i$ parts of exponential functions, noting the following properties of hyperbolic numbers:
\begin{equation}
  \begin{split}
  e^{\theta ij} = \cosh{\theta} + ij \sinh{\theta} \\
  e^{-\theta ij} = \cosh{\theta} - ij \sinh{\theta}
  \end{split}
\end{equation}

The system in the $\mathbb{C}(j)$ formulation then becomes
\begin{equation}
  \frac{\partial }{\partial t} \begin{bmatrix}
    \hat{x}_1 \\
    \hat{x}_2 \\
    \hat{y}_1 \\
    \hat{y}_2 
  \end{bmatrix}
  = \frac{\pi \kappa}{2 a^2}
  \begin{bmatrix}
     0 &  B & -A & -C \\
     B &  0 &  C &  A \\
    -A & -C &  0 & -B \\
     C &  A & -B &  0 
  \end{bmatrix}
  \begin{bmatrix}
    \hat{x}_1 \\
    \hat{x}_2 \\
    \hat{y}_1 \\
    \hat{y}_2 
  \end{bmatrix}
  \label{eq:Cjsystemsymmetric}
\end{equation}
which is the same result presented by~\cite{lamb1932hydrodynamics} when considering the different reference systems. The system written in the bicomplex formulation is
\begin{equation}
  \frac{\partial }{\partial t} \begin{bmatrix}
    \hat{z}_1 \\
    \hat{z}_2 \\
    \overline{\hat{z}_1} \\
    \overline{\hat{z}_2} 
  \end{bmatrix}
  = \frac{\pi \kappa}{2 a^2}
  \begin{bmatrix}
      0 &    0 &   -iA &  B-iC \\
      0 &    0 &  B+iC &    iA \\
     iA & B+iC &     0 &     0 \\
   B-iC &  -iA &     0 &     0 
  \end{bmatrix}
  \begin{bmatrix}
    \hat{z}_1 \\
    \hat{z}_2 \\
    \overline{\hat{z}_1} \\
    \overline{\hat{z}_2} 
  \end{bmatrix} .
\end{equation}

As expected, it can be confirmed that both formulations give the same set of eigenvalues
\begin{equation}
  \lambda = \frac{\pi \kappa}{2 a^2} (\pm B  \pm \sqrt{A^2-C^2})
\end{equation}
where the term $(A^2-C^2)$ is always positive, as can be seen in figure~\ref{fig:A2-C2}(a), indicating that the flow is always unstable.

\begin{figure}[t]
  \centering
  \sbox0{\includegraphics[height=0.27\textwidth]{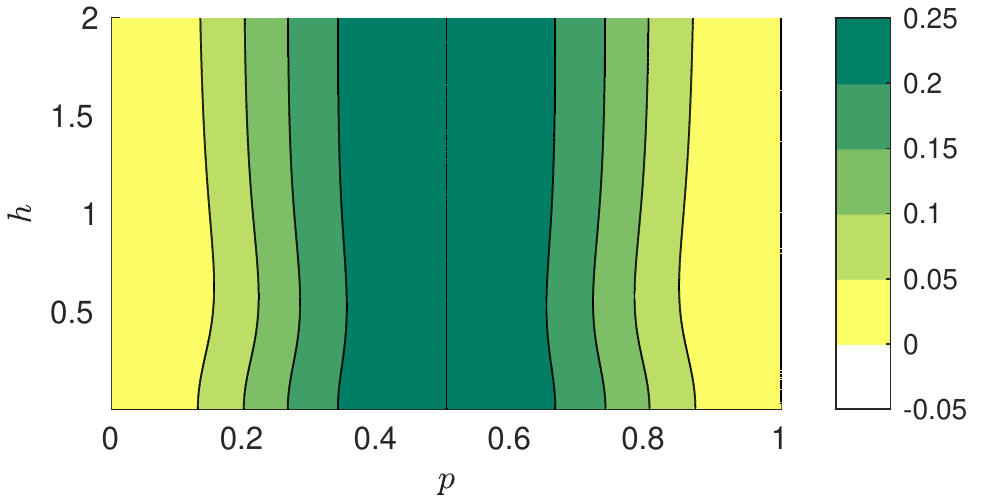}}
  \begin{tikzpicture}
    \node[anchor=south west,inner sep=0] (image) at (0,0) {\includegraphics[clip,trim={.0\wd0} {.0\ht0} {.25\wd0} {.0\ht0},height=0.27\textwidth]{Figures/symm2.pdf}};
    \node at (0.2,0.2) {(a)}; 
  \end{tikzpicture}
  \begin{tikzpicture}
    \node[anchor=south west,inner sep=0] (image) at (0,0) {\includegraphics[clip,trim={.0\wd0} {.0\ht0} {.0\wd0} {.0\ht0},height=0.27\textwidth]{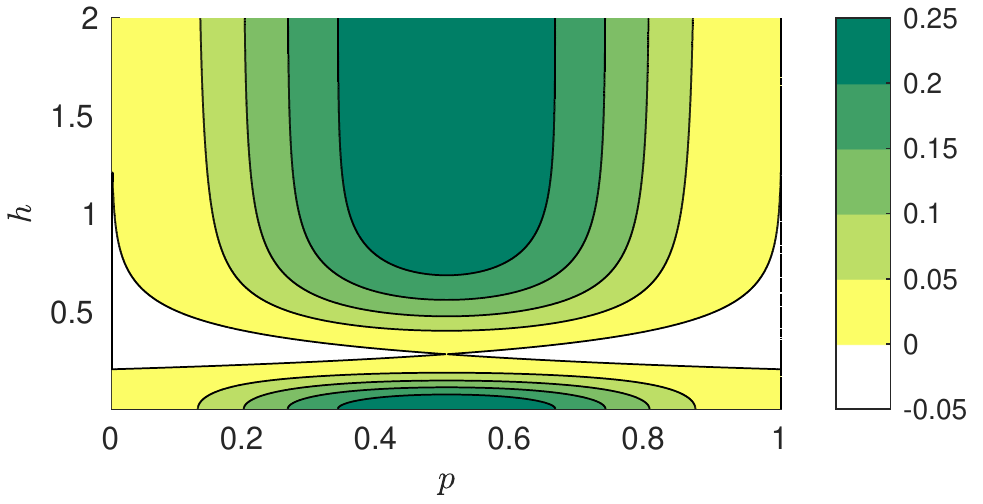}};
    \node at (0.2,0.2) {(b)}; 
  \end{tikzpicture}
  \caption{Contours of $(A^2-C^2)$. $(A^2-C^2)$ is positive for the symmetric vortex street (a) for all values of $p$ and $h$. However, $(A^2-C^2)$ can be negative for the staggered vortex street (b).}
  \label{fig:A2-C2}
\end{figure}

\subsubsection{Staggered von Kármán vortex street} \label{sec:stabilitykarmanstaggered}

Now, we compare to the results of a staggered von Kármán vortex street. In this case, $\delta = \pi k=\pi (i h+1/2)$:
\begin{equation}
  \mathbf{M} = \frac{i \pi \kappa}{2 a^2}
  \begin{bmatrix}
    2 p (1 - p) - \frac{1}{\cosh^2(\pi h)} & \frac{e^{-2 p \pi h ij} e^{- p \pi j}}{\cosh^2(\pi h)} - \frac{2 p e^{(1-2p) \pi h ij} e^{-p \pi j}}{\cosh(\pi h)} \\
    - \frac{e^{2 p \pi h ij} e^{p \pi j}}{\cosh^2(\pi h)} + \frac{2 p e^{-(1-2p) \pi h ij} e^{p \pi j}}{\cosh(\pi h)} & - 2 p (1 - p) + \frac{1}{\cosh^2(\pi h)}
  \end{bmatrix}
\end{equation}
that leads to $Re^i$ and $Im^i$ parts
\begin{equation}
  Re^i(\mathbf{M}) = \frac{\pi \kappa}{2 a^2}
  \begin{bmatrix}
    0 & B e^{-p \pi j}\\
    B e^{p \pi j} & 0
  \end{bmatrix}
\end{equation}
\begin{equation}
  Im^i(\mathbf{M}) = \frac{\pi \kappa}{2 a^2}
  \begin{bmatrix}
    A & C e^{- p \pi j} \\
    - C e^{p \pi j} & - A
  \end{bmatrix}
\end{equation}
where
\begin{equation}
  A = 2 p (1 - p) - \frac{1}{\cosh^2(\pi h)}
\end{equation}
\begin{equation}
  B =  j \left[ \frac{\sinh{(2 p \pi h)}}{\cosh^2(\pi h)} + \frac{2 p \sinh{((1-2p) \pi h)}}{\cosh(\pi h)} \right]
  \label{eq:Bstaggered}
\end{equation}
\begin{equation}
  C =  \frac{\cosh{(2 p \pi h)}}{\cosh^2(\pi h)} - \frac{2 p \cosh{((1-2p) \pi h)}}{\cosh(\pi h)} .
\end{equation}

The system in $\mathbb{C}(j)$ formulation is
\begin{equation}
  \frac{\partial }{\partial t} \begin{bmatrix}
    \hat{x}_1 \\
    \hat{x}_2 \\
    \hat{y}_1 \\
    \hat{y}_2 
  \end{bmatrix}
  = \frac{\pi \kappa}{2 a^2}
  \begin{bmatrix}
     0             &  B e^{-p \pi j} & -A             & -C e^{-p \pi j} \\
     B e^{p \pi j} &  0              &  C e^{p \pi j} &  A              \\
    -A             & -C e^{-p \pi j} &  0             & -B e^{-p \pi j} \\
     C e^{p \pi j} &  A              & -B e^{p \pi j} &  0              
  \end{bmatrix}
  \begin{bmatrix}
    \hat{x}_1 \\
    \hat{x}_2 \\
    \hat{y}_1 \\
    \hat{y}_2 
  \end{bmatrix}
  \label{eq:systemstaggered}
\end{equation}
which, again, is the same result presented by~\cite{lamb1932hydrodynamics} when considering the different definitions and reference systems. In particular, the terms $e^{-p \pi j}$ and $e^{p \pi j}$ are due to different origins of the reference system in the $x$ direction: while we consider the origin in the second vortex row to be on the first of its vortices, Lamb\cite{lamb1932hydrodynamics} considered the origin to be the same of the first vortex row. Defining $\hat{z}_2^l = \hat{z}_2 e^{-p \pi j}$, the origin of the system is moved and the perturbation in the second row is $z_{2n}' = \hat{z_{2}} e^{2 \pi p n j} = \hat{z}_2^l e^{2 \pi p (n+1/2) j}$. Then it is easy to see that equation~\ref{eq:systemstaggered} is equivalent to the system:
\begin{equation}
  \frac{\partial }{\partial t} \begin{bmatrix}
    \hat{x}_1 \\
    \hat{x}_2^l \\
    \hat{y}_1 \\
    \hat{y}_2^l 
  \end{bmatrix}
  = \frac{\pi \kappa}{2 a^2}
  \begin{bmatrix}
     0 &  B & -A & -C \\
     B &  0 &  C &  A \\
    -A & -C &  0 & -B \\
     C &  A & -B &  0 
  \end{bmatrix}
  \begin{bmatrix}
    \hat{x}_1 \\
    \hat{x}_2^l \\
    \hat{y}_1 \\
    \hat{y}_2^l 
  \end{bmatrix}
\end{equation}
which is the same system as~\ref{eq:Cjsystemsymmetric}, as expected by the results of~\cite{lamb1932hydrodynamics} (also agrees with~\cite{saffman1992vortex}, noting that~\cite{saffman1992vortex} has a typo in the equation equivalent to~\ref{eq:Bstaggered}\cite{mowlavi2016spatio}).

The system in bicomplex formulation and eigenvalues are analogous to the ones presented in section~\ref{sec:stabilitykarman}(\ref{sec:stabilitykarmansymmetric}). The eigenvalues
\begin{equation}
  \lambda = \frac{\pi \kappa}{2 a^2} (\pm B  \pm \sqrt{A^2-C^2})
  \label{eq:eigenvaluesvonkarman}
\end{equation}
can have the term $\sqrt{A^2-C^2}$ not real (figure~\ref{fig:A2-C2}(b)). In this case, it should be defined so that it belongs to $\mathbb{C}(j)$, as discussed in section~\ref{sec:framework}. In other words, if $C^2>A^2$, then $\pm \sqrt{A^2-C^2}$ should be represented by $\pm j\sqrt{C^2 - A^2}$ (not $\pm i\sqrt{C^2 - A^2}$). Hence, equation~\ref{eq:eigenvaluesvonkarman} is equivalent to
\begin{equation}
  \lambda = \frac{\pi \kappa}{2 a^2} (\pm B  \pm j\sqrt{C^2-A^2}) .
\end{equation}

It can be seen in figure~\ref{fig:A2-C2}(b) that the stability to infinitesimal perturbations of every wavenumber $p$ occurs only for one possible value of $h$. This is the famous value of $h$ calculated by~\cite{karman1912mechanismus} that can be found by solving the inequality $C^2-A^2 \geq 0$ for $p=1/2$:
\begin{equation}
  h = \frac{1}{\pi} \cosh^{-1}{\sqrt{2}} \approx 0.2805 .
\end{equation}

These results show that the development performed in bicomplex variables gives results equivalent to the derivation in terms of a single complex unit. Moreover, the linear system in the bicomplex space can be interpreted as a generalized solution of the von Kármán vortex street, for which the symmetric and staggered vortex streets are particular cases.

\subsection{Stability of a single row of vortices} \label{sec:stabilityrow}
The case of a single row of vortices is a simplification of the von Kármán vortex street. In this case
\begin{equation}
  \mathbf{M} = \frac{i \pi \kappa}{2 a^2}
  \begin{bmatrix}
    2 p (1 - p)
  \end{bmatrix} ,
\end{equation}
\begin{equation}
  A = 2 p (1 - p) ,
\end{equation}
the system in the $\mathbb{C}(j)$ formulation becomes
\begin{equation}
  \frac{\partial }{\partial t} \begin{bmatrix}
    \hat{x}_1 \\
    \hat{y}_1 
  \end{bmatrix}
  = \frac{\pi \kappa}{2 a^2}
  \begin{bmatrix}
     0 & -A \\
    -A &  0 
  \end{bmatrix}
  \begin{bmatrix}
    \hat{x}_1 \\
    \hat{y}_1 
  \end{bmatrix}
\end{equation}
and the system in bicomplex formulation is
\begin{equation}
  \frac{\partial }{\partial t} \begin{bmatrix}
    \hat{z}_1 \\
    \overline{\hat{z}_1} 
  \end{bmatrix}
  = \frac{\pi \kappa}{2 a^2}
  \begin{bmatrix}
     0 & -i A \\
    i A &  0 
  \end{bmatrix}
  \begin{bmatrix}
    \hat{z}_1 \\
    \overline{\hat{z}_1} 
  \end{bmatrix} .
\end{equation}

The eigenvalues are then
\begin{equation}
  \lambda = \pm \frac{\pi \kappa}{2 a^2} A = \pm \frac{\pi \kappa}{a^2} p (1 - p)
\end{equation}
just as calculated by~\cite{saffman1992vortex} and~\cite{lamb1932hydrodynamics} (when considering the different definitions and reference systems). The eigenvalues are all real, indicating exponential growth or decay without oscillatory behaviour. This relationship has been shown to be relevant for the stability of helical vortices, theoretically, numerically and experimentally~\cite{sarmast2014mutual,quaranta2015long,quaranta2019local}. In particular, the maximum growth rate $\sigma_{max} = \max{\{Re^j(\lambda)\}}$ of disturbances is found for $p=1/2$
\begin{equation}
  \frac{2 a^2}{\kappa} \sigma_{max} = \frac{\pi}{2}
\end{equation}
which corresponds to the out-of-phase motion of neighboring vortices.

The eigenvectors of $\mathbf{R}$ (eigenvectors in terms of $\hat{x}_1$ and $\hat{y}_1$) are
\begin{equation}
  \begin{bmatrix}
    1 \\
    1 
  \end{bmatrix}  \text{and}
  \begin{bmatrix}
    1 \\
    -1 
  \end{bmatrix}
\end{equation}
and the eigenvectors of $\mathbf{Q}$ (eigenvectors in terms of $\hat{z}_1$ and $\overline{\hat{z}_1}$) are
\begin{equation}
  \begin{bmatrix}
    1 \\
    -i 
  \end{bmatrix}  \text{and}
  \begin{bmatrix}
    1 \\
    i 
  \end{bmatrix} .
\end{equation}

Saffman\cite{saffman1992vortex} arrived at a second-order system in terms of $\hat{z}_1$. As shown in theorem~\ref{the:eigMconjM}, the square root of the eigenvalues of the second-order system in terms of $\hat{z}_1$ gives the eigenvalue of the first-order system.

For this case, the second-order system in terms of $\hat{z}_1$ and $\overline{\hat{z}_1}$ is
\begin{equation}
  \frac{\partial^2 }{\partial t^2} \begin{bmatrix}
    \hat{z}_1 \\
    \overline{\hat{z}_1} 
  \end{bmatrix}
  = \left( \frac{\pi \kappa}{2 a^2} \right)^2
  \begin{bmatrix}
     A^2 & 0    \\
     0   &  A^2 
  \end{bmatrix}
  \begin{bmatrix}
    \hat{z}_1 \\
    \overline{\hat{z}_1} 
  \end{bmatrix}
\end{equation}
and the system in terms of $\hat{z}_1$ is
\begin{equation}
  \frac{\partial^2 \hat{z}_1}{\partial t^2}
  = \left( \frac{\pi \kappa}{2 a^2} \right)^2 A^2 \hat{z}_1 .
\end{equation}
As expected, both formulations give the eigenvalues
\begin{equation}
  \mu = \left( \frac{\pi \kappa}{2 a^2} A \right)^2 = \lambda^2 .
\end{equation}
We can also confirm that the eigenvectors of the second-order system do not give information about the eigenvectors of the first-order system: every vector is an eigenvector of $\mathbf{Q}^2$ because $\mathbf{Q}^2$ is a multiple of the identity matrix, while the eigenvectors of $\mathbf{Q}$ are well defined.

Saffman\cite{saffman1992vortex} however, used $i$ to represent both the complex velocity and the complex exponential \emph{ansatz}. Basically, the only difference from the current method was the application of an \emph{ansatz} in the form $e^{2 \pi p n i}$ in step~\ref{item:complexansatz} of section~\ref{sec:framework}. Analysing the results, we can understand why this formal error is inconsequential for this particular case:
\begin{itemize}
  \item Matrix $\mathbf{M}$ and, consequently, $\mathbf{R}$, $\mathbf{Q}$ and $\mathbf{Q}^2$ do not have components in $j$;
  \item The eigenvalues are completely real, hence there is no ambiguity regarding the interpretation of a complex eigenvalue;
  \item The eigenvectors in terms of $\hat{x}_1$ and $\hat{y}_1$ are completely real. The eigenvectors in terms of $\hat{z}_1$ and $\overline{\hat{z}_1}$ are complex, however, the imaginary part was correctly interpreted by Saffman\cite{saffman1992vortex}, as the $y$-component of the eigenvector, without considering that it may be interpreted as a phase difference. Nonetheless, if there were a phase difference between $\hat{x}_1$ and $\hat{y}_1$, in other words, if one of the terms had an imaginary part $j$, the interpretation could be compromised. For example, let's assume that one eigenvector of $\mathbf{R}$ for a certain problem is 
  \begin{equation}
    \begin{bmatrix}
      1 \\
      j \\
    \end{bmatrix}
  \end{equation}
  in our notation. The corresponding eigenvector of $\mathbf{Q}$ would be (applying corollary~\ref{cor:eigenvectorsRQ} and then normalizing):
  \begin{equation}
    \begin{bmatrix}
      \frac{1+ij}{2} \\
      \frac{1-ij}{2} \\
    \end{bmatrix}
  \end{equation}
  that would appear as
  \begin{equation}
    \begin{bmatrix}
      1 \\
      0 \\
    \end{bmatrix}
  \end{equation}
  if no distinction between $i$ and $j$ is made (as in the method of~\cite{saffman1992vortex}), which would lead to errors. Hence, the fact that the eigenvectors of $\mathbf{R}$ are completely real for a row of vortices, means that the interpretation of the imaginary part of the eigenvectors of $\mathbf{Q}$ as the $y$-component is correct.
\end{itemize}




\section{Extension of the method}
This framework can be applied to other flow configurations, since it is believed to be applicable to all two-dimensional incompressible potential flows. By applying conformal maps, more complex geometries can be studied. For example, airfoils can be represented in a relatively simpler manner by Joukowski or Kármán-Trefftz transformations~\cite{milne1968theoretical}. Another example is a direct extension of the configuration studied here: the von Kármán vortex street of hollow vortices, whose equilibrium solutions have been found~\cite{crowdy2011analytical}, however, their linear stability is yet to be calculated (as far as the authors are aware).

A natural extension of the framework presented here is the study of spatio-temporal stability (similarly to~\cite{mowlavi2016spatio}), by imposing disturbances with \emph{ansatz} in the form $e^{j(\varphi n-\omega t)}$, where $\varphi$ can be complex in $\mathbb{C}(j)$. The use of multiple imaginary units in the study of the response of flows to harmonic excitation, in the form $e^{j \omega t}$, has already been performed in~\cite{wu1961swimming,moore2014analytical,moore2017fast,hauge2021new,baddoo2021generalization}. The formalism provided by the bicomplex algebra might facilitate the adoption of this approach.

This framework was applied to a case of a potential flow. However, the same ideas can be used in any problem in which the $y$-component is represented by the imaginary part and the $x$-component is represented by the real part of a complex position, velocity or field. The main concept applied here is to represent the space or field with one complex unit and the \emph{ansatz} or kernel with a different complex unit. Multiple imaginary units have already been applied in the context of transform methods for the complex Helmholtz equation~\cite{hauge2021new}.

\section{Conclusions}
In this work, we present a framework that reconciles the complex representation of potential flows and the complex representation of perturbations in the study of the stability of two-dimensional incompressible potential flows. The same problems can be solved with real vectors and matrices, as has been done in the past. However, this is also the case for other applications of complex numbers in classical mechanics: there is a real matrix form of the complex numbers that allows problems to be solved using the algebra of real matrices. Nevertheless, the complex representation is usually preferred due to its convenient and compact formulation. The same can be said of bicomplex numbers. By trying to solve the stability problem, we noted that the use of bicomplex numbers is a natural way to unify the two representations, each by a different imaginary unity. Known challenges of the bicomplex algebra, such as the existence of multiple zero divisors and the loss of the Fundamental Theory of Algebra, were showed not to be issues for the current application. The application of this method to the von Kármán vortex street allowed the discovery of a generalized linear system, showing that the convenience and compactness of the bicomplex approach are worthwhile.

This paper was born out of a mistake. The first author, naively, tried to apply the method of chapter 7.5 of~\cite{saffman1992vortex} to the von Kármán vortex street. Many other mistakes or incomplete results can also be found in the literature on the subject (see section~\ref{sec:introduction}). If this paper deters other people from making the same mistakes, it is already worth the reading. However, we argue that the power of the framework presented here goes way beyond that. By applying this method to a century-old problem that is one of the most studied flow phenomena, we found a generalized formula that was previously unknown. Maybe some analytical solutions for more sophisticated flows are just waiting to be discovered by using the mathematical tools allowed by the algebra of bicomplex numbers. At this moment, it is reasonable to update the quote of Munk\cite{munk1925elements}: for the stability of two-dimensional potential flows, it might pay to get acquainted with this method even if one has never occupied himself with bicomplex numbers before.

\enlargethispage{20pt}

\dataccess{This article has no additional data.}

\aucontribute{V.K.: conceptualization, methodology, formal analysis, writing—original draft. A.H. and D.H.: project administration, supervision, writing—review and editing.}

\competing{The authors declare no competing interests.}

\funding{VK thanks KTH Engineering Mechanics for partially funding this work.}



\vskip2pc


\bibliographystyle{RS} 
\bibliography{references} 

\end{document}